\def\noheaderplainsetup{

\topmargin=0pt \headheight=0pt \headsep=0pt  \oddsidemargin=0pt \evensidemargin=0pt  \textheight=9.1truein \textwidth=6.5truein}   
\begin{document}


\newcommand{\clsixteen}{\mbox{{\bf CL16}}}
\newcommand{\clseventeen}{\mbox{{\bf CL17}}}
\newcommand{\cleighteen}{\mbox{{\bf CL18}}}
\newcommand{\lp}[2]{\mbox{\bf Lp}^{#1}_{#2}}
\newcommand{\Rank}{\mbox{Rank}}
\newcommand{\rank}{\mbox{\scriptsize Rank}}
\newcommand{\Bigmlc}{\mbox{{\Large $\wedge$}}}
\newcommand{\Bigmld}{\mbox{{\Large $\vee$}}}
\newcommand{\bigmlc}{\mbox{{\large $\wedge$}}}
\newcommand{\bigmld}{\mbox{{\large $\vee$}}}
\newcommand{\bigleftbrace}{\mbox{{\large $\{$}}}
\newcommand{\bigrightbrace}{\mbox{{\large $\}$}}}
\newcommand{\Bigleftbrace}{\mbox{{\Large $\{$}}}
\newcommand{\Bigrightbrace}{\mbox{{\Large $\}$}}}
\newcommand{\emptyrun}{\langle\rangle}
\newcommand{\legal}[2]{\mbox{\bf Lr}^{#1}_{#2}} 
\newcommand{\win}[2]{\mbox{\bf Wn}^{#1}_{#2}} 
\newcommand{\seq}[1]{\langle #1 \rangle} 


\newcommand{\ade}{\mbox{\large $\sqcup$}}      
\newcommand{\ada}{\mbox{\large $\sqcap$}}      
\newcommand{\gneg}{\neg}                  
\newcommand{\mli}{\rightarrow}                     
\newcommand{\mld}{\vee}    
\newcommand{\mlc}{\wedge}  
\newcommand{\add}{\hspace{0pt}\sqcup}           
\newcommand{\adc}{\hspace{0pt}\sqcap}
\newcommand{\st}{\mbox{\raisebox{-0.05cm}{$\circ$}\hspace{-0.13cm}\raisebox{0.16cm}{\tiny $\mid$}\hspace{2pt}}}
\newcommand{\cost}{\mbox{\raisebox{0.12cm}{$\circ$}\hspace{-0.13cm}\raisebox{0.02cm}{\tiny $\mid$}\hspace{2pt}}}


\newtheorem{theoremm}{Theorem}[section]
\newtheorem{factt}[theoremm]{Fact}
\newtheorem{definitionn}[theoremm]{Definition}
\newtheorem{lemmaa}[theoremm]{Lemma}
\newtheorem{conventionn}[theoremm]{Convention}
\newtheorem{claimm}[theoremm]{Claim}
\newtheorem{examplee}[theoremm]{Example}
\newtheorem{corollaryy}[theoremm]{Corollary}
\newtheorem{remarkk}[theoremm]{Remark}

\newenvironment{definition}{\begin{definitionn} \em}{ \end{definitionn}}
\newenvironment{theorem}{\begin{theoremm}}{\end{theoremm}}
\newenvironment{lemma}{\begin{lemmaa}}{\end{lemmaa}}
\newenvironment{fact}{\begin{factt}}{\end{factt}}
\newenvironment{corollary}{\begin{corollaryy}}{\end{corollaryy}}
\newenvironment{example}{\begin{examplee}}{\end{examplee}}
\newenvironment{claim}{\begin{claimm}}{\end{claimm}}
\newenvironment{remark}{\begin{remarkk}}{\end{remarkk}}
\newenvironment{convention}{\begin{conventionn} \em}{\end{conventionn}}
\newenvironment{proof}{ {\bf Proof.} }{\  $\Box$ \vspace{.1in} }

\title{A propositional cirquent calculus for computability logic}
\author{Giorgi Japaridze
  \\  
 \\ Department of Computing Sciences, Villanova University, USA\\
 Email: giorgi.japaridze@villanova.edu\\
 URL: http://www.csc.villanova.edu/$^\sim$japaridz/
}
\date{}
\maketitle

\begin{abstract}  
{\em Cirquent calculus} is a proof system with inherent ability to account for sharing subcomponents in logical expressions. 
Within its framework, this article constructs an axiomatization $\cleighteen$ of the basic propositional fragment of {\em computability logic} --- the game-semantically conceived logic of computational resources and tasks. The nonlogical atoms of this fragment represent arbitrary so called static games,  and the connectives of its logical vocabulary are negation and the parallel and choice versions of   conjunction and disjunction. The main technical result of the article is a proof of the soundness and completeness of $\cleighteen$  with respect to the semantics of computability logic.  
\end{abstract}

\noindent {\em MSC}: primary: 03B47; secondary: 03B70; 03F03; 03F20; 68T15. 

\  

\noindent {\em Keywords}: Proof theory; Cirquent calculus; Resource semantics; Deep inference; Computability logic; Clustering

\section{Introduction}\label{intr}
{\em Computability logic}, or {\em CoL} for short, is a long haul venture for redeveloping  logic as a formal theory of computability, as opposed to its more traditional role  of a formal theory of truth (see \cite{iran} for an overview).  It starts by asking what kinds of mathematical objects ``computational problems''  are,  and finds that, in full generality, they can be most adequately understood as games played by a machine against its environment, with computability meaning existence of an (algorithmic) winning strategy for the machine. Then CoL tries to identify a collection of the most natural operations on games to constitute the operators of its logical vocabulary. Validity of a formula is understood as being ``always computable'', i.e., computable in virtue of the meanings of its logical operators regardless of how the non-logical atoms are interpreted.  The main and ongoing stage in developing CoL is finding adequate axiomatizations for ever more expressive fragments of this otherwise semantically construed and inordinately expressive logic. The present article makes a new step in this direction.  

Of the over three dozen logical operators introduced in CoL so far, the most basic ones are  {\em negation} \mbox{(``{\em not}\hspace{1pt}'')} $\neg$, {\em parallel conjunction} (``{\em pand}\hspace{1pt}'') $\mlc$, {\em parallel disjunction} (``{\em por}\hspace{1pt}'') $\mld$, {\em choice conjunction} (``{\em chand}\hspace{1pt}'') $\adc$ and  {\em choice disjunction} (``{\em chor}\hspace{1pt}'') $\add$.  Below is a brief informal explanation of the semantics of these operators as operations on games.   

The game $\gneg G$ is nothing but $G$ with the roles  of the two players interchanged. $G\mlc H$ and $G\mld H$ are games playing which means playing $G$ and $H$ in parallel; in order to be the winner in $G\mlc H$, the machine needs to win in both $G$ and $H$, while in the case of $G\mld H$ winning in just one component is sufficient.  $G\adc H$ is a game where the environment has to choose one of the two components, after which the game continues and the winner is determined according to the rules of the chosen component;  if the environment fails to make such a choice, it loses.  $G\add H$ is similar, with the difference that here it is the machine who has the privilege and duty to make an initial choice.  

Game operations with similar intuitive characterizations have been originally studied by Lorenzen \cite{Lor61}, Hintikka \cite{Hintikka73} and Blass \cite{Bla72,Bla92}. Hintikka interpreted the operators of classical logic in the style of our choice operators. So did Lorenzen for intuitionistic conjunction and disjunction, with the idea of parallel connectives (specifically disjunction) only implicitly present in his interpretation of the intuitionistic implication. Blass was the first to systematically differentiate between the parallel and choice sorts of operations and call attention to their resemblance with the multiplicative ($\mlc,\mld$) and additive ($\adc,\add$) connectives of Girard's \cite{Gir87} linear logic. The majority of other operators  studied within the framework of CoL, however, have no known analogs in the literature. 

Among the peculiarities of CoL is that it has two sorts of atoms: {\em general} atoms standing for any games,\footnote{More precisely, for any so called static games --- see the end of Section 2.} and {\em elementary} atoms standing for propositions, where the latter are understood as games with no moves, automatically won by the machine when true and lost when false. The fragments of CoL where only general atoms are present \cite{BauerLMCS, Cirq, Propint, Japfour, separating, taming1, taming2, Ver, qu, Xure1, Xure2} are said to be {\em general-base}, the fragments with only elementary atoms \cite{Japtocl1,Japtcs,lbcs,cl12,ifcolog,igpl21} are said to be {\em elementary-base}, and the fragments with both sorts of atoms  \cite{Japtocl2, Japtcs2, Japseq, Japtoggling, XuIGPL} are said to be  {\em mixed-base}. Technically, the fragment to which the present article is devoted is general-base; however, as will be shown in Section \ref{from}, there is a straightforward embedding of the corresponding mixed-base fragment into it, which allows us to say that the present fragment is in fact mixed-base.

No axiomatizations of CoL or its nontrivial fragments had been found within the framework of traditional proof calculi, and it was conjectured \cite{Bla92,Cirq} that such an axiomatization was impossible to achieve in principle even for just the $\{\neg,\mlc,\mld\}$-subfragment. This conjecture was later proven to be true by Das and Strassburger in  \cite{Anupam}. 
To break the ice, \cite{Cirq} introduced the new sort of a proof calculus termed {\em cirquent calculus}, in which a sound and complete axiomatization of the general-base $\{\neg,\mlc,\mld\}$-fragment of CoL was constructed; this result was later lifted to the mixed-base level in \cite{XuIGPL}. Instead of tree-like objects such as formulas, sequents, hypersequents \cite{Avron} or deep-inference structures \cite{Brunnler01,Gug07,Guglielmi01}, cirquent calculus deals with circuit-style constructs dubbed {\em cirquents}. Cirquents come in a variety of forms and sometimes, as in the present work or in \cite{ifcolog,igpl21,XuIf,XuLast}, they are written textually (thus resembling formulas) rather than graphically, but their essence and main distinguishing feature remain the same: these are syntactic constructs explicitly allowing {\em sharing} of components between different subcomponents. Ordinary formulas of CoL are nothing but special cases of cirquents --- degenerate cirquents where nothing is shared. 

Sharing, itself, also takes various forms, such as, for instance, two $\mld$-gates sharing a child, or two $\add$-gates sharing the left-or-right choice associated with them without otherwise sharing descendants.  Most cirquent calculus systems studied so far \cite{BauerLMCS, Cirq, Japdeep, taming1, taming2, XuIGPL} only incorporate the first sort of sharing. The idea of the second sort of sharing, termed {\em clustering}, was introduced and motivated in \cite{lmcs}. Among the potential benefits of it outlined in \cite{lmcs} was offering new perspectives on independence-friendly logic \cite{HS97}. Later work by Wenyan Xu \cite{XuIf,XuLast} made significant progress towards materializing such a potential. Another benefit offered by clustering was materialized in \cite{ifcolog}, where a cirquent calculus axiomatization  $\clsixteen$ of the elementary-base $\{\neg,\mlc,\mld,\adc,\add\}$-fragment of CoL was found. No axiomatizations of any $\adc,\add$-containing fragments of CoL had been known before that (other than the brute-force  constructions of \cite{Japtocl1,Japtocl2,Japtcs,Japtcs2,Japseq,Japtoggling}, with their deduction mechanisms more resembling games than logical calculi). Generalizing from formulas to cirquents with clustering thus offers not only greater expressiveness, but also makes the otherwise unaxiomatizable CoL or certain fragments of it  amenable to being tamed as logical calculi. 

The present contribution introduces the cirquent calculus system $\cleighteen$ for the general-base---and, as noted, essentially also mixed-base---$\{\neg,\mlc,\mld,\adc,\add\}$-fragment of CoL  and proves its soundness and completeness. This system extends and strengthens $\clsixteen$ in a substantial way, lifting it from the (relatively simple) elementary-base level to the  (usually much harder to tame) general-base level and extending the idea of clustering from logical operators to nonlogical atoms.

\section{Games and interactive machines}\label{sgames}
As  noted, CoL understands computational problems as games played between two players, called  the {\em machine} and   the  {\em environment}, symbolically named $\top$ and $\bot$, respectively. $\top$ is a deterministic mechanical device only capable of following algorithmic strategies, whereas there are no restrictions on the behavior of $\bot$.  
$\wp$ is always  a variable ranging over $\{\top,\bot\}$. \(\overline{\wp}\) means $\wp$'s adversary, i.e., the player that is not $\wp$.

A {\bf move} is a finite string over the standard keyboard alphabet. 
A {\bf labmove} (abbreviates ``labeled move'') is a move prefixed with $\top$ or $\bot$, with such a prefix ({\bf label}) indicating which player has made the move. 
A {\bf run} is a (finite or infinite) sequence of labmoves, and a 
{\bf position} is a finite run.
Runs will be often delimited by ``$\langle$'' and ``$\rangle$'', as in $\seq{\Gamma}$, $\seq{\Gamma,\top\alpha,\Theta}$ etc.  
 
A set $S$ of runs is said to be {\bf prefix-closed} iff, whenever a run is in $ S$, so are all of its initial segments. And a set $S$ of runs is {\bf limit-closed} iff, for any infinite run $\Gamma$,  if all finite initial segments of $\Gamma$ are in  $S$, then so is $\Gamma$ itself.    

\begin{definition}\label{game}
A {\bf game} is a pair  $G=(\legal{G}{},\win{G}{})$, where:

1.   $\legal{G}{}$  is a nonempty, prefix- and limit-closed set of runs. 

2.      $\win{G}{}$ is a mapping from $\legal{G}{}$ to $\{\top,\bot\}$. 
\end{definition}

Note that, since $\legal{G}{}$ is required to be nonempty and prefix-closed, it always contains the {\bf empty position/run} $\seq{}$.  

Intuitively, $\legal{G}{}$ is the set of  {\bf legal runs} of $G$ and, for each such run, $\win{G}{}$ tells us which of the two players $\wp\in\{\top,\bot\}$ has   won  the run. Correspondingly, when $\win{G}{}\seq{\Gamma}=\wp$, we say that $\Gamma$ is a  {\bf $\wp$-won}, or {\bf won by} $\wp$,  run of $G$.
In all cases, we shall say ``{\bf illegal}'' for ``not legal'' and  ``{\bf lost}'' for ``not won''. 
A {\bf legal move} by  $\wp$  in a position $\Theta$ is a move $\alpha$ such that $\seq{\Theta,\wp\alpha}\in \legal{G}{}$. We say that a run $\Gamma$ 
is {\bf $\wp$-legal}  iff    either $\Gamma$ is legal, or else, where $\Theta$ is the shortest illegal initial segment of $\Gamma$, the last move 
of $\Theta$ is $\overline{\wp}$-labeled. Intuitively, such a $\Gamma$ is a run where $\wp$ has not made any illegal moves unless its 
adversary $\overline{\wp}$ has done so first. 

The following definition extends the 
meaning of the word ``won''  from legal runs to all runs by stipulating that an illegal run is always lost by the player that has made the first illegal move:

\begin{definition}\label{newdef}
For a game $G$, run $\Gamma$ and player $\wp$, we say that $\Gamma$ is a {\bf $\wp$-won} (or won by $\wp$) run of $G$ iff $\Gamma$ is 

1. either  a legal run of $G$ with $\win{G}{}\seq{\Gamma}=\wp$, 

2. or a $\overline{\wp}$-illegal run of $G$. 

\end{definition}

As an example, consider the problem of deciding a unary predicate $p(x)$. It can be understood as the game whose every legal position is either $\seq{ \bot m,\top \mbox{yes}}$ or $\seq{ \bot m,\top \mbox{no}}$ or $\seq{\bot m}$ or $\seq{}$, where $m$ is a number written in (for instance) decimal notation;
the empty run $\seq{}$ of this game is won by $\top$, a length-1 run $\seq{\bot m}$ is won by $\bot$, and a length-2 run 
$\seq{ \bot m,\top \mbox{yes}}$ (resp. $\seq{ \bot m,\top \mbox{no}}$) is won by $\top$ iff $p(m)$ is true 
(resp. false). The intuitions that determine  these arrangements are as follows. A move $\bot m$ is the ``input'', making 
which amounts to asking the machine ``is $p(m)$ true?''. The move $\top \mbox{yes}$ or  $\top \mbox{no}$ is the ``output'', amounting to correspondingly answering ``yes'' or ``no''.   The machine wins the run $\seq{\bot m,\top \mbox{yes}}$ iff $p(m)$ is (indeed) true, and wins the run $\seq{\bot m,\top \mbox{no}}$ iff $p(m)$ is (indeed) not true.   The run $\seq{\bot m}$ corresponds to the situation where there was an input/question but the machine failed to generate an  output/answer,  so the machine loses. And the run $\seq{}$ corresponds to the situation where there was no input either, so the machine has nothing to answer for and it therefore does not lose---i.e., it wins---the game. 

A game is said to be {\bf elementary} iff it has no legal runs other than the (always legal) empty run $\emptyrun$. That is, an elementary game is a game without any (legal) moves,  automatically won or lost. There are exactly two  such games, for which we use the same symbols $\top$ and $\bot$ as for the two players: the game $\top$ automatically won by player $\top$, and the game $\bot$ automatically won by player $\bot$.\footnote{Precisely, we have $\win{\top}{}\emptyrun=\top$ and $\win{\bot}{}\emptyrun=\bot$.} Computability logic is a conservative extension of classical logic, understanding classical propositions as elementary games. And, just like classical logic, it sees no difference between any two true propositions such as ``$2+2=4$'' and ``{\em Snow is white}'', identifying both with the elementary game $\top$; similarly, it treats false propositions such as ``$2+2=5$'' or ``{\em Snow is black}'' as the elementary game $\bot$. 

Algorithmic game-playing strategies we understand as the interactive versions of Turing machines named {\bf HPM}s (``Hard-Play Machines''). An HPM  is a Turing machine with the additional capability of making moves. The adversary can also move at any time, and such moves are the only nondeterministic events from the machine's perspective. Along with the ordinary  read/write {\em work tape},\footnote{In computational-complexity-sensitive treatments, an HPM is allowed to have any (fixed) number of work tapes.} the machine also has an additional  tape called  the  {\em run tape}. The latter, at any time,  spells the ``current position'' of the play. The role of this tape is to make the interaction history fully visible to the machine.  It is read-only, and its content is automatically updated every time either player makes a move. A more detailed description of  the HPM model, if necessary, can be found in \cite{Jap03}. 

In these terms,  a  {\bf solution} ($\top$'s winning strategy) for a given  game $A$ is understood as an HPM $\cal M$ such that, no matter how the environment acts during its interaction with $\cal M$ (what moves it makes and when), the run incrementally spelled on the run tape is  $\top$-won. When this is the case, we write ${\cal M}\models A$ and say that ${\cal M}$ {\bf wins}, or {\bf solves}, $A$, and that $A$ is a {\bf computable} game.   

There is no need to define $\bot$'s strategies, because all possible behaviors by $\bot$ are accounted for by the different possible nondeterministic updates of the run tape of an HPM. 

In the above outline, we described HPMs in a relaxed fashion, without being specific about technical details such as, for instance, how, exactly, moves are made by the machine, how many moves either player can make at once, what happens if both players attempt to move ``simultaneously'', etc. As it happens, all reasonable arrangements yield the same class of winnable games as long as we consider a certain natural subclass of games called {\em static}. 
Intuitively, these are games where the relative speeds of the players are irrelevant because, Andreas Blass put it in his review of \cite{Jap03}, ``it never hurts a player to postpone making moves''. Below comes a formal definition of this concept.

For either player $\wp$, we say that a run $\Upsilon$ is a {\bf $\wp$-delay} of a run $\Gamma$ iff:\vspace{-5pt}
\begin{itemize}
\item for both players $\wp'\in\{\top,\bot\}$, the subsequence of $\wp'$-labeled moves of $\Upsilon$ is the same as that of $\Gamma$, and
\item for any $n,k\geq 1$, if the $n$th $\wp$-labeled move is made later than (is to the right of) the $k$th $\overline{\wp}$-labeled move in $\Gamma$, then so is it in $\Upsilon$.\vspace{-5pt}
\end{itemize}
\noindent The above conditions mean that in  $\Upsilon$  each player has made the same sequence of moves as in $\Gamma$, only, in $\Upsilon$, $\wp$ might have been acting with some delay. Example: $\seq{\bot\beta_1,\top \alpha_1,\bot\beta_2, \top\alpha_2,\bot\beta_3,\top\alpha_3}$ is a $\top$-delay of  $\seq{\top \alpha_1,\bot\beta_1,\bot\beta_2,\top\alpha_2,\top\alpha_3,\bot\beta_3}$.

\begin{definition}
\label{sgd}
We say that a game  $A$ is {\bf static} iff, for either player $\wp$, whenever a run $\Upsilon$ is a $\wp$-delay of 
a run $\Gamma$, we have:\vspace{-2pt}
\begin{enumerate}
\item if $\Gamma$ is a $\wp$-legal run of $A$, then so is $\Upsilon$;
\item if $\Gamma$ is a $\wp$-won run of $A$, then so is $\Upsilon$.
\end{enumerate}\end{definition}

All games that we shall see in this paper are static.
Dealing only with such games, which makes timing technicalities fully irrelevant, allows us to describe and analyze strategies (HPMs) in a relaxed fashion. For instance, imagine an HPM $\cal N$ --- a purported solution of a static game $G$ --- works by simulating and mimicking the work and actions of another HPM $\cal M$ in the scenario where $\cal M$'s imaginary adversary acts in the same way as $\cal N$'s own adversary. Due to the simulation overhead, $\cal N$ will generally be slower than $\cal M$ in responding to its adversary's moves.  Yet, we may safely assume/pretend/imagine that the speeds of the two machines do not differ; that is in the sense that if the imaginary (fast) version of $\cal N$ wins $G$, then so does the actual (slow) $\cal N$.      In what follows, we will often implicitly rely on this observation. 

Throughout this paper, for a run $\Gamma$, we let $\overline{\Gamma}$ mean the result of  simultaneously changing the label  $\top$ to $\bot$ and  vice versa in each labmove of $\Gamma$.   

\begin{definition} The {\bf negation} of a game $A$ is defined as the game $\neg A$ such that:  
\begin{itemize}
  \item $\legal{\neg A}{}=\{\overline{\Gamma}\ |\ \Gamma\in\legal{A}{}\}$;
  \item $\win{\neg A}{}\seq{\Gamma}=\top$ iff $\win{A}{}\seq{\overline{\Gamma}}=\bot$ (for any legal run $\Gamma$
of $\neg A$).
\end{itemize}
\end{definition} 

Intuitively, as noted earlier, $\neg A$ is $A$ with the roles of the two players interchanged: $\top$'s (legal) moves and wins become $\bot$'s moves and wins, and vice versa.  For instance, if  \mbox{\em Chess} is the game of chess from the point of view of the white player, then $\neg\mbox{\em Chess}$ is the same game but as seen by the black player.
Whenever $A$ is a static game, so is $\neg A$, as the operation of negation --- as well as all other game operations studied in CoL --- is known \cite{Jap03} to preserve the static property. 
  
\section{Syntax}\label{Syntax}

We fix two groups of syntactic objects used in the language of $\cleighteen$: (general game) {\bf letters} $Letter_0$, $Letter_1$, $Letter_2$,\ldots  and {\bf clusters} $0,1,2,\ldots$.  As metavariables, we will be using $P,Q,R,S,\ldots$ for letters and  $a,b,c,d,\ldots$ for clusters. A (nonlogical) {\bf atom} is a pair consisting of a letter $P$ and a cluster $a$, written as $P^a$. A {\bf literal} is $P^a$ ({\bf positive} literal) or $\neg P^a$ ({\bf negative} literal), where $P^a$ is an atom. Such a literal or atom is said to be {\bf $P$-lettered} and {\bf $a$-clustered}; we can also say that $P$ is the {\bf letter of} and $a$ the {\bf cluster of} such a literal or atom.  A {\bf choice connective} is a pair $O^a$, where $O\in\{\adc,\add\}$ and $a$ is a cluster, said to be the {\bf cluster of} the connective.

To get some preliminary insights into the intended purpose of clusters, here we make an informal digression before continuing with the definition of the syntax of   $\cleighteen$. 
While the letters of  literals are  meant to represent games, the role of the clusters of literals is to indicate whether different occurrences of same-lettered literals in an expression are different copies of the corresponding  game or the same copy but written at different places. Clustering allows us to write cirquents in a linear (formula-style) fashion; otherwise, if we used graphical (circuit-style) representations of cirquents, there would be no need for clusters. 
For instance, consider the game $G$ represented by the following circuit-style construct:

\begin{center} 

\begin{picture}(120,80)

\put(0,65){\small{\em Chess}}
\put(40,65){\small{\em Checkers}}
\put(92,65){\small{\em Chess}}

\put(35,42){\line(1,1){20}}
\put(35,42){\line(-1,1){20}}
\put(82,42){\line(-1,1){20}}
\put(82,42){\line(1,1){20}}
\put(34,35){\circle{15}}
\put(81,35){\circle{15}}
\put(57,5){\circle{15}}
\put(31,33){$\mlc$}
\put(78,33){$\mlc$}
\put(54,2){$\mld$}
\put(58,13){\line(3,2){22}}
\put(58,13){\line(-3,2){22}}

\end{picture}
\end{center}

\noindent Such a $G$ is a parallel/simultaneous play on three boards, where {\em Chess} is played on boards \#1 and \#3, and {\em Checkers} played on board \#2. In order to win, $\top$ needs to win either on both  boards \#1 and ($\mlc$) \#2, or ($\mld$) on both boards \#2 and \#3 (or on all three boards). Note that, even though the plays on boards \#1 and \#3 should follow the same rules (those of {\em Chess}), the two plays may run in different ways due to different moves made in them, which makes it possible that one is eventually won while the other is lost. Anyway, how could $G$ be written in a linear fashion? The expression 
\begin{equation}\label{ttp}
(\mbox{\em Chess}\mlc\mbox{\em Checkers})\mld(\mbox{\em Checkers}\mlc\mbox{\em Chess})\end{equation}
would not be adequate, as it would suggest a play on four rather than three boards. This is where clusters (of literals) come in. To indicate that the two occurrences of {\em Checkers} in (\ref{ttp}) are in fact the same copy of {\em Checkers} while the two occurrences of {\em Chess} are two different copies of {\em Chess}, we can assign a same cluster $b$
to the former and two different clusters $a,c$ to the latter. This way, $G$ can be adequately written as 
\[(\mbox{\em Chess}^a\mlc\mbox{\em Checkers}^b)\mld(\mbox{\em Checkers}^b\mlc\mbox{\em Chess}^c),\]
showing that {\em Checkers} is {\em shared} between the two conjunctive subcomponents while {\em Chess} is not. 

As for connective clustering, it is associated with a different sort of sharing: sharing of choices.  Imagine two subcomponents $A_0\add A_1$ and $B_0\add B_1$ in some complex expression/game.
From what was said in Section \ref{intr}, we know that $A_0\add A_1$ allows $\top$ to choose between $A_0$ and $A_1$, and similarly for $B_0\add B_1$. If the choices in the two components are inependent  from each other, we would write those components as $A_0\add^a A_1$ and $B_0\add^b B_1$ for some clusters $a\not= b$. But another possibility is that the two left-or-right  choices are required to be the same. That is, choosing $A_0$ (resp. $A_1$) in  $A_0\add A_1$ automatically results in choosing $B_0$ (resp. $B_1$) in $B_0\add B_1$. This arrangement will be accounted for by writing the two $\add$-disjunctions as $A_0\add^c A_1$ and $B_0\add^c B_1$ for some---the same for both $\add$s---cluster $c$. 

It should be pointed out that two  connectives having the same cluster indicates a shared choice between them only if both connectives are of the same type, i.e., both are $\adc$ or both are $\add$. So, there is no semantical difference whatsoever between, for instance, the expressions $(P\add^1 Q)\mlc(R\adc^1 S)$ and $(P\add^1 Q)\mlc(R\adc^2 S)$. Similarly,  two  literals having the same cluster indicates a shared run of a game only if both literals have the same letter and both are positive or both are negative. So,  when $Q\not=P$, there is no semantical difference between the expressions $P^1\mld Q^1$ and $P^1\mld Q^2$, or between  $P^1\mld \neg P^1$ and $P^1\mld \neg P^2$. Needless to say that a connective and a literal having the same cluster does not signify any connection between the two.

\begin{definition}\label{cirquent}
A {\bf cirquent} is defined inductively as follows:

\begin{itemize}
  \item $\top$ and $\bot$ are cirquents. 
  \item Each (positive or negative) literal is a cirquent. 
\item If $A$ and $B$ are cirquents,  then $(A)\mld (B)$ and $(A)\mlc (B)$ are cirquents.  Such cirquents are respectively said to be {\bf $\mld$-rooted} and {\bf $\mlc$-rooted}. 
  \item If $A$ and $B$ are cirquents and   $c$ is a cluster,  then $(A)\adc^c(B)$  and $(A)\add^c(B)$ are  cirquents.  Such a cirquent is said to be {\bf $\adc^c$-rooted} or {\bf $\add^c$-rooted}, respectively 
When we simply say {\bf $\adc$-rooted} or {\bf $\add$-rooted}, it means  $\adc^c$-rooted or $\add^c$-rooted for some (whatever) cluster $c$.
\end{itemize}
\end{definition}

For readability, we usually omit parentheses in cirquents as long as this causes no ambiguity. 
Sometimes we may write an expression such as $A_1\mld\ldots \mld A_n$, where $n$ is a (possibly unspecified) positive integer. This is to be understood as just $A_1$ when $n=1$, and as any order-respecting $\mld$-combination of the cirquents $A_1,\ldots,A_n$ when $n\geq 2$.  ``Order-respecting'' in the sense that $A_1$ is the leftmost item of the combination, then
 comes $A_2$, then $A_3$, etc. Similarly for $A_1 \mlc\ldots\mlc A_n$. So, for instance, both $(A\mlc B)\mlc C$ and $A\mlc(B\mlc C)$—
 and no other cirquent—can be written as $A\mlc B \mlc C$.

\section{Semantics}

As a matter of notational convention, we agree that, for a run $\Gamma$ and literal $L$,  $\Gamma^{L.}$ means the run obtained from $\Gamma$ by first deleting all labmoves except those that look like $\wp L.\alpha$ for some string $\alpha$ and $\wp\in\{\top,\bot\}$, and then further replacing each such labmove  $\wp L.\alpha$ with just $\wp\alpha$. Intuitively, a move of the form $ L.\alpha$  is the move $\alpha$ made in all  copies of the game represented by $L$, and $\Gamma^{L.}$ is the (common) subrun of $\Gamma$ that took place in those copies. Further, an {\bf in-literal labmove} is a labmove of the form $L.\alpha$, where $L$ is a literal.
And a {\bf choice labmove} is either $\bot\adc^c\hspace{-2pt}.i$ or  $\top\add^c\hspace{-2pt}.i$ for some cluster $c$ and  $i\in\{0,1\}$.

An {\bf interpretation}  is a function $^*$ that assigns a static game $P^*$ to  each letter  $P$. Given an interpretation $^*$, we immediately extend it to literals  by stipulating that, for every letter $P$ and cluster $c$, $(P^c)^*= P^*$ and $(\neg P^c)^*=\neg(P^*)$.

\begin{definition}
\label{clsem}
Every cirquent $C$ and interpretation $^*$ induces the unique game $C^*$, to which we refer as ``$C$ {\bf under} $^*$'', defined as follows.

\begin{description}
  \item[$\legal{C^*}{}$] A run $\Gamma$ is a legal run of $C^*$---i.e., $\Gamma\in\legal{C^*}{}$---iff every labmove of $\Gamma$ is either an in-literal or a choice one, and  the following two conditions are satisfied:  
\begin{enumerate}
  \item For every literal $L$,   $\Gamma^{L.}$ is a legal run of $L^*$. 
  \item For every cluster $c$,  $\Gamma$ contains at most one  labmove of the form  $\bot \adc^c\hspace{-2pt}.i$ (resp. $\top \add^c\hspace{-2pt}.i$), where $i\in\{0,1\}$.  When such a  labmove is present,  
any (sub)cirquent of the form $A_0\adc^c A_1$  (resp. $A_0\add^c A_1$) is said to be {\bf $\Gamma$-resolved}, and $A_i$ is said to be its {\bf $\Gamma$-resolvent};  otherwise we say that $A_0\adc^c A_1$  (resp. $A_0\add^c A_1$) is {\bf $\Gamma$-unresolved}.  Here the prefix ``$\Gamma$-'' can be omitted before the word ``(un)resolved'' when clear from the context.   The terms ``resolved'' and ``unresolved'', with the same meanings, can as well be used in reference to just the connective $\adc^c$ (resp. $\add^c$).  

\end{enumerate}

\item[$\win{C^*}{}$]  A legal run $\Gamma$ of $C^*$ is $\top$-won---i.e.,  $\win{C^{*}}{}\seq{\Gamma}=\top$---iff one of the following conditions is satisfied: 
\begin{enumerate}
\item  $C$ is $\top$.
\item $C$  is a literal  $L$, and $\win{L^*}{}\seq{\Gamma^{L.}}=\top$.  
\item $C$ has the form $A_0\mlc A_1$ (resp. $A_0\mld A_1$) and, for both (resp. at least one) $i\in\{0,1\}$,  $\win{A_{i}^{*}}{}\seq{\Gamma}=\top$.
\item $C$ is a $\Gamma$-resolved $\adc^c$- or $\add^c$-rooted (for whatever cluster $c$) cirquent  and,  where $B$ is the resolvent, $\win{B^{*}}{}\seq{\Gamma}=\top$.   
\item $C$ is a $\Gamma$-unresolved $\adc^c$-rooted  (for whatever cluster $c$) cirquent. 
\end{enumerate}
\end{description}
\end{definition}

When an interpretation $^*$ is fixed in the context or is irrelevant, by abuse of notation we may omit explicit references to it, and identify a cirquent $C$ with the game $C^*$. For instance, we may say that the machine wins $C$ instead of saying that the machine wins $C$ under $^*$. Notice also that $\legal{C^*}{}$ only depends on $^*$ but not on $C$, which, in similar contexts, allows us to just say ``{\bf $^*$-legal run}" (or just ``legal run" if $^*$ is clear from the context) instead of ``legal run of $C^*$''. 

The intuitive meaning of a move $\adc^c\hspace{-2pt}.i$ (resp. $\add^c\hspace{-2pt}.i$), which can be legally made only by player $\bot$ (resp. $\top$) and made only once, is choosing the component $A_i$ in every subcirquent of $C$ of the form $A_0 \adc^c A_1$ (resp. $A_0 \add^c A_1$). The effect of such a move is that, when determining the outcome of the play, $A_0 \adc^c A_1$ (resp. $A_0 \add^c A_1$) will be treated as if it was simply $A_i$. And if such a choice move is absent, then $A_0 \adc^c A_1$ (resp. $A_0 \add^c A_1$) will be treated as if it was simply $\top$ (resp. $\bot$); this is so because making a choice between $A_0$ and $A_1$ is not only a privilege, but also an obligation of the corresponding player.

It should be noted that the relaxed, cirquent-independent design choice for the $\legal{}{}$ component of our present games deems certain meaningless yet ``harmless'' moves legal, such as moves in non-existent choice connectives or non-existent literals. Excluding such moves would result in more elegant definitions without otherwise affecting winnability of games, but create significant technical inconveniences in our subsequent proofs. 

\begin{fact}\label{stg}
For any cirquent $C$ and interpretation $^*$, the game $C^*$ is static.
 \end{fact}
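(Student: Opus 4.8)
The plan is to establish, for each player $\wp$, the two closure properties required by Definition~\ref{sgd}: that a $\wp$-delay $\Upsilon$ of a run $\Gamma$ inherits $\wp$-legality (clause 1) and $\wp$-wonness (clause 2). Everything rests on three preliminary observations about such a $\wp$-delay. (i) \emph{Projection is a delay}: for every literal $L$, the subrun $\Upsilon^{L.}$ is a $\wp$-delay of $\Gamma^{L.}$. This is read off from the two defining clauses of ``$\wp$-delay'': the coincidence of the $\wp'$-labeled subsequences of $\Upsilon$ and $\Gamma$ restricts to a coincidence of their $L$-labeled parts, and the timing clause for the projections is inherited from that for the full runs because the $n$-th $\wp$-labeled and $k$-th $\overline{\wp}$-labeled $L$-moves occupy fixed slots inside the preserved full subsequences. (ii) \emph{Reversal}: $\Upsilon$ is a $\wp$-delay of $\Gamma$ if and only if $\Gamma$ is an $\overline{\wp}$-delay of $\Upsilon$, the two timing clauses being contrapositives of each other. (iii) \emph{Choice-invariance}: since a $\wp$-delay leaves each player's move subsequence intact, $\Gamma$ and $\Upsilon$ contain exactly the same choice labmoves, so by Definition~\ref{clsem} every subcirquent is resolved in $\Gamma$ exactly when it is in $\Upsilon$, and with the same resolvent. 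Throughout I use that each $L^*$ is static, which holds because an interpretation assigns static games to letters and, as already noted, negation preserves the static property.

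For clause 1, I would recall that $\legal{C^*}{}$ depends only on $^*$, and that a position ceases to be legal for exactly one of three reasons: a malformed move, a second choice labmove for some cluster, or an in-literal move $\wp' L.\alpha$ that makes $\Gamma^{L.}$ illegal in $L^*$. Assuming $\Gamma$ is $\wp$-legal but $\Upsilon$ is not, I take the shortest illegal prefix $\seq{\Theta,\wp\mu}$ of $\Upsilon$ and argue by the type of defect $\wp\mu$ introduces. When $\Gamma$ is in fact legal the argument is short: $\Upsilon$ then exhibits no malformed and no duplicate-choice defect at all (by choice-invariance and preservation of move contents), so $\wp\mu$ must be in-literal, whence $\Theta^{L.}$ is legal while $\seq{\Theta^{L.},\wp\alpha}$ is $\wp$-illegal in $L^*$; this contradicts the fact that, by observation (i), $\Upsilon^{L.}$ is a $\wp$-delay of the legal (hence $\wp$-legal) run $\Gamma^{L.}$ together with the staticity of $L^*$.

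The general case of clause 1, where $\Gamma$ may itself be $\overline{\wp}$-illegal, is the main obstacle. The difficulty is that $\wp$-legality is a \emph{global} property of a run: it speaks of the first illegal move of the whole run. Hence a $\wp$-defect appearing in some projection $\Gamma^{L.}$ may be ``masked'' by an earlier $\overline{\wp}$-defect residing in a different literal $M$ or in the choice bookkeeping, and I must show this masking survives the delay. The plan is to locate in $\Gamma$ its global first illegal move, an $\overline{\wp}$-move $\overline{\wp}\rho$, say the $k_0$-th $\overline{\wp}$-move, and, for any candidate $\wp$-defect occurring at the $n_0$-th $\wp$-move, to use that $\overline{\wp}\rho$ precedes it in $\Gamma$, so that the $n_0$-th $\wp$-move lies to the right of the $k_0$-th $\overline{\wp}$-move; the timing clause of the delay then keeps this order in $\Upsilon$. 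Combined with reversal and the staticity of the relevant $M^*$ (to carry the $\overline{\wp}$-illegality of $\Gamma^{M.}$ over to $\Upsilon^{M.}$), this forces $\Upsilon$'s first illegal move to remain on $\overline{\wp}$'s side. Correctly coordinating these positions across the different sources of illegality is the delicate part.

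For clause 2, I would first dispose of the illegal case. If $\Gamma$ is $\wp$-won because it is $\overline{\wp}$-illegal, then $\Upsilon$ cannot be legal: were it legal it would be $\overline{\wp}$-legal, and since $\Gamma$ is an $\overline{\wp}$-delay of $\Upsilon$ by reversal, clause 1 for player $\overline{\wp}$ would force $\Gamma$ to be $\overline{\wp}$-legal, a contradiction. So $\Upsilon$ is illegal and, being $\wp$-legal by clause 1, is $\overline{\wp}$-illegal, hence $\wp$-won. The remaining case, $\Gamma$ legal with $\win{C^*}{}\seq{\Gamma}=\wp$, is handled by a win-transfer lemma proved by structural induction on $C$: for legal $\Gamma,\Upsilon$ with $\Upsilon$ a $\wp$-delay of $\Gamma$, $\win{C^*}{}\seq{\Gamma}=\wp$ implies $\win{C^*}{}\seq{\Upsilon}=\wp$. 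The base cases $\top,\bot$ are immediate; for a literal the claim is exactly the staticity of $L^*$ read through observation (i); the $\mlc$- and $\mld$-cases follow from the winning clauses of Definition~\ref{clsem} and the induction hypothesis applied to the subcirquents in the matching direction $\wp$; and the $\adc$- and $\add$-cases follow from choice-invariance, the induction hypothesis on the resolvent, and the fixed verdicts on unresolved subcirquents. Since at each step the induction hypothesis is invoked for the same player $\wp$ whose verdict is being transferred, the one-sidedness of the delay causes no trouble.
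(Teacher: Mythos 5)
There is a genuine gap, and it sits exactly where you flagged ``the delicate part'': clause 1 of Definition~\ref{sgd} in the case where $\Gamma$ is $\wp$-legal but not legal. Your three preliminary observations (projection of a delay is a delay, reversal, choice-invariance) are correct and are also the ingredients of the paper's proof; your treatment of clause 2 is complete (your uniform structural induction over both players is, if anything, tidier than the paper's device of proving the $\wp=\top$ case by induction and reducing $\wp=\bot$ to it via reversal); and the sub-case of clause 1 where $\Gamma$ is outright legal is fine. But the plan you sketch for the remaining case does not go through as stated. You propose to locate $\Gamma$'s first illegal move $\overline{\wp}\rho$ and to argue that the candidate $\wp$-defect of $\Upsilon$, occurring at the $n_0$-th $\wp$-move, lies to the right of $\overline{\wp}\rho$ in $\Gamma$ and hence also in $\Upsilon$. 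The trouble is that the $n_0$-th $\wp$-move need not create any defect in $\Gamma$ at all: in $\Upsilon$ it is preceded by more $\overline{\wp}$-moves than in $\Gamma$, and for an arbitrary static $L^*$ this enlarged context can turn a move that was unproblematic in $\Gamma$ into an illegal one in $\Upsilon$. So the premise ``$\overline{\wp}\rho$ precedes the $\wp$-defect in $\Gamma$'' is not available; and even where it is, you would still need to show both that $\overline{\wp}\rho$ remains a defect in $\Upsilon$'s thinner context and that no $\wp$-defect of $\Upsilon$ occurs strictly earlier --- which is essentially the statement being proved.

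The missing idea is the paper's induction on the length of the shortest illegal initial segment $\seq{\Psi,\wp\alpha}$ of $\Upsilon$. One takes the shortest prefix $\seq{\Phi,\wp\alpha}$ of $\Gamma$ containing the same $\wp$-moves; if $\Phi$ is $\wp$-illegal one is done, and otherwise one shows $\Phi$ is in fact fully legal: a $\overline{\wp}$-illegal $\Phi$ would give $\Gamma$ a shorter illegal prefix, whence the induction hypothesis applied to the reversed delay ($\Gamma$ is a $\overline{\wp}$-delay of $\Upsilon$) would make $\Upsilon$ $\overline{\wp}$-illegal, contradicting its being $\wp$-illegal. Only with the legality of $\Phi$ in hand can the defect of $\seq{\Psi,\wp\alpha}$ be transferred back to $\seq{\Phi,\wp\alpha}$ through the staticity of $L^*$, using that $\seq{\Psi,\wp\alpha}^{L.}$ is a $\wp$-delay of $\seq{\Phi,\wp\alpha,\Theta}^{L.}$, where $\Theta$ collects the $\overline{\wp}$-moves of $\Psi$ missing from $\Phi$ and contributes no $\wp$-moves. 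Your proposal contains all the raw materials for this but not the inductive scaffolding that coordinates them, and without it the argument for this case is not a proof.
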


\begin{proof} Consider arbitrary cirquent $C$,  interpretation $^*$, player $\wp\in\{\top,\bot\}$ and two runs $\Gamma,\Upsilon$, where $\Upsilon$ is a $\wp$-delay of $\Gamma$. In what follows, whenever we say ``legal", ``won", etc., it is to be understood in the context of $C^*$. 

We first need to see that condition 1 of Definition \ref{sgd} is satisfied. We do this by verifying its contrapositive.   Assume the run $\Upsilon$ is $\wp$-illegal. We want to show that then so is $\Gamma$, which will be done by induction on the length of the  shortest 
illegal initial segment of $\Upsilon$. 
Let $\seq{\Psi,\wp\alpha}$ be such a segment. Let $\seq{\Phi,\wp\alpha}$ be the shortest initial segment of $\Gamma$ containing all the $\wp$-labeled moves\footnote{In this context, we are talking about occurrences of labmoves rather than labmoves as such, with each occurrence counting. So, a more accurate phrase here would be to ``... as many $\wp$-labeled moves as...''.}    $\seq{\Psi,\wp\alpha}$. If the position $\Phi$ is $\wp$-illegal, 
then so is the run $\Gamma$ and we are done. Therefore, for the rest of our reasoning, assume that 
\begin{equation}\label{654}
\mbox{\em $\Phi$ is $\wp$-legal.}
\end{equation}

Let $\Theta$ be the sequence of those $\overline\wp$-labeled moves of $\Psi$ that are not in $\Phi$. Obviously
\begin{equation}\label{e141}
\mbox{\em  $\seq{\Psi,\wp\alpha}$ is a $\wp$-delay of $\seq{\Phi,\wp\alpha,\Theta}$.}
\end{equation}
We also claim that
\begin{equation}\label{e142}
\mbox{\em $\Phi$ is legal.}
\end{equation}
Indeed, suppose this was not the case. Then, by (\ref{654}), $\Phi$ should be $\overline\wp$-illegal. This would make $\Gamma$  a $\overline\wp$-illegal run of $C^*$ with $\Phi$ as an illegal initial segment which is shorter than $\seq{\Psi,\wp\alpha}$. Then, by the induction hypothesis, any run for which $\Gamma$ is a $\overline\wp$-delay, would be $\overline\wp$-illegal. But, as observed in Lemma 4.6 of \cite{Jap03}, the fact that $\Upsilon$ is a $\wp$-delay of $\Gamma$ implies that $\Gamma$ is a $\overline\wp$-delay of $\Upsilon$. So, $\Upsilon$ would be $\overline\wp$-illegal, which is a contradiction because, according to our assumptions, $\Upsilon$ is $\wp$-illegal.

We are continuing our proof. There are three possible reasons to why $\seq{\Psi,\wp\alpha}$ is  illegal (while $\Psi$ being legal):

{\em Reason 1}: $\wp\alpha$ is neither an in-literal nor a choice labmove. Then, in view of (\ref{e142}), $\seq{\Phi,\wp\alpha}$ is $\wp$-illegal. But then, as $\seq{\Phi,\wp\alpha}$ happens to be an initial segment of $\Gamma$, the latter  is also $\wp$-illegal, as desired.

{\em Reason 2}: $\wp\alpha$ is a choice labmove $\wp O^c.i$ (where $O=\adc$ if $\wp=\bot$ and $O=\add$ if $\wp=\top$) such that $\Psi$ already contains a labmove $\wp O^c.j$. But  then, again, in view of (\ref{e142}), the same reason makes $\seq{\Phi,\wp\alpha}$ snd thus $\Gamma$ also $\wp$-illegal, as desired.

{\em Reason 3}: $\wp\alpha$ is an in-literal labmove $\wp L.\beta$ such that  $\seq{\Psi,\wp\alpha}^{L.}$ is a ($\wp$-)illegal position of $L^*$. It is known (cf. \cite{Jap03}) that the operation of negation preserves the static property of games. Hence, the game $L^*$ is static regardless of whether the literal $L$ is positive or negative. Notice that (\ref{e141}) implies that  $\seq{\Psi,\wp\alpha}^{L.}$ is a $\wp$-delay of $\seq{\Phi,\wp\alpha,\Theta}^{L.}$.  So, as $L^*$ is a static game,  $\seq{\Phi,\wp\alpha,\Theta}^{L.}$ is a $\wp$-illegal position of $L^*$ (for otherwise, by condition 1 of Definition \ref{sgd}, $\seq{\Psi,\wp\alpha}^{L.}$ would be a $\wp$-legal position of $L^*$). But the $\Theta$ part of  $\seq{\Phi,\wp\alpha,\Theta}$ contains no $\wp$-labeled moves, and therefore $\seq{\Phi,\wp\alpha}^{L.}$  is a $\wp$-illegal position of $L^*$. This, in view of (\ref{e142}), implies that  $\seq{\Phi,\wp\alpha}$  is a $\wp$-illegal run of $C^*$. Then so is $\Gamma$, as $\seq{\Phi,\wp\alpha}$ is an initial segment of it. This completes our proof of the fact that condition 1 of Definition \ref{sgd} is satisfied.

For condition 2 of Definition \ref{sgd}, assume $\Gamma$ is a $\wp$-won run (of $C^*$). Our goal is to show that $\Upsilon$ is also $\wp$-won. If $\Upsilon$ is $\overline{\wp}$-illegal, then it is won by $\wp$ and we are done. So, assume that
$\Upsilon$ is $\overline{\wp}$-legal. According to the earlier mentioned Lemma 4.6 of \cite{Jap03}, if $\Upsilon$ is a $\wp$-delay of $\Gamma$, then $\Gamma$ is a $\overline{\wp}$-delay of $\Upsilon$. So, by the already verified condition 1 of Definition \ref{sgd} and the fact that $\Upsilon$ is $\overline{\wp}$-legal, we find that $\Gamma$ is $\overline{\wp}$-legal. 
$\Gamma$ is also  $\wp$-legal, because otherwise it would not be won by $\wp$. Consequently,  by condition 1 of Definition \ref{sgd}, $\Upsilon$ is $\wp$-legal as well.  To summarize, $\Gamma$ and $\Upsilon$ are both $\wp$- and $\overline{\wp}$-legal. Thus, we have narrowed down our considerations to the case where both $\Gamma$ and $\Upsilon$ are (simply) legal. Keeping this in mind, 
we now separately consider two cases, depending on whether $\wp$ is $\top$ or $\bot$.

{\em Case of $\wp=\top$.} According to our assumptions, $\Gamma$ is a $\top$-won run of $C^*$, and $\Upsilon$ is a $\top$-delay of $\Gamma$.
We want to show that (not only $\Gamma$ but also) $\Upsilon$ satisfies the five conditions of the $\win{C^*}{}$ clause of Definition \ref{clsem}. This can be done by induction on the complexity of $C$. Condition \#1 of the five conditions is trivial, and conditions \#3, \#4 and \#5 (inductive step) are also straightforward, taking into account that $\Gamma$ and $\Upsilon$ contain exactly the same choice labmoves. So, let us just look at condition \#2. 
Assume $C$ is a literal $L$.  As $\Gamma$ is a $\top$-won run of $C^*$,   $\Gamma^{L.}$ is   a $\top$-won run of $L^*$.
 But $\Upsilon^{L.}$ is a $\top$-delay of $\Gamma^{L.}$.
Therefore, since (as observed earlier) the game $L^*$  is static, 
$\Upsilon^{L.}$ is a  
$\top$-won run of $L^*$, as desired.

{\em Case of $\wp=\bot$.} According to our assumptions, $\Gamma$ is a $\bot$-won run of $C^*$, and $\Upsilon$ is a $\bot$-delay of $\Gamma$. We want to show that $\Upsilon$ is $\bot$-won. Deny this. Then $\Upsilon$ is $\top$-won. Also, according to the earlier mentioned Lemma 4.6 of \cite{Jap03}, $\Gamma$ is a $\top$-delay of $\Upsilon$. So, by the already verified Case  of $\wp=\top$, $\Gamma$ is $\top$-won, which contradicts our assumption that $\Gamma$ is $\bot$-won and hence proves that   $\Upsilon$ is $\bot$-won, as desired.
\end{proof}

\begin{definition} \label{krisha}
  A {\bf logical}  {\bf solution} of a cirquent $C$ is an HPM $\cal H$ such that, for any interpretation $^*$, $\cal H$ is a solution (cf. Section \ref{sgames})  of $ C ^*$. We say that $C$ is  {\bf valid} if it has a logical solution.
\end{definition}

\section{Axiomatics}\label{lll}

An {\bf inference rule} is a set $\cal R$ of pairs $\vec{A}\leadsto B$, where $\vec{A}$ is either a  cirquent or a pair of cirquents, called the {\bf premise(s)}, and $B$ is a cirquent, called the {\bf conclusion}. 
When $\vec{A}\leadsto B$ is in $\cal R$, we say that $B$ {\bf follows from} $\vec{A}$ by rule $\cal R$. 
   
We will be schematically representing all single-premise  inference rules in the form 
\[
X[B_{1},\ldots,B_{k}]
\ \leadsto \ X[A_1,\ldots, A_k],\]  
where $X[A_1,\ldots, A_k]$ is the conclusion together with subcirquents $A_1,\ldots,A_k$, and  $X[B_{1},\ldots,B_{k}]$ is the premise cirquent resulting from $X[A_1,\ldots, A_k]$ by replacing all occurrences of the subcirquents $A_1,\ldots,A_k$ with $B_{1},\ldots,B_{k}$, respectively.  Here (as in the Cleansing rules below) some $A_i$, in turn, can be written as $Y[C_1,\ldots,C_m]$ and the corresponding $B_i$ as $Y[D_1,\ldots,D_m]$. In such a case, again,  $Y[D_1,\ldots,D_m]$ should be understood as the result of replacing, in $Y[C_1,\ldots,C_m]$, all occurrences of the subcirquents $C_1,\ldots,C_m$ with $D_{1},\ldots,D_{m}$, respectively.

Given a cirquent $C$, a {\bf pseudoelementary} letter of $C$ is a letter $Q$ occurring in $C$  such that $C$ contains at most one positive and at most one negative  (but possibly with multiple occurrences)  $Q$-lettered literal.  That is, there is at most one cluster $a$ such that $C$ contains the literal $Q^a$ and at most one cluster $b$ such that $C$ contains the literal $\neg Q^b$. A {\bf non-pseudoelementary} letter of $C$ is a letter that occurs in $C$ but is not a pseudoelementary letter of $Q$. Further, a (non-)pseudoelementary atom or literal of $C$ is one whose letter is a (non-)pseudoelementary letter of $C$. 

The names of the following rules have been selected with the conclusion-to-premise view in mind.\vspace{10pt}

\begin{center} {\bf The inference rules of   $\cleighteen$}\end{center}

\begin{description}
\item[Por-Commutativity:]  $X[B\mld A]\leadsto X[A\mld B]$.
\item[Pand-Commutativity:] $X[B\mlc A]\leadsto X[A\mlc B]$. 
\item[Por-Associativity:]  $X[A\mld(B\mld C)]\leadsto  X[(A\mld B)\mld C)]$.
\item[Pand-Associativity:] $X[A\mlc(B\mlc C)]\leadsto  X[(A\mlc B)\mlc C)]$.
\item[Por-Identity:]  $X[A]\leadsto X[A\mld \bot]$.
\item[Pand-Identity:]  $X[A]\leadsto  X[A\mlc \top]$.  
\item[Por-Domination:]  $X[\top]\leadsto X[ A\mld \top]$.
\item[Pand-Domination:]  $X[\bot]\leadsto  X[A\mlc \bot]$.  
\item[Pand-Distribution:]     $X[(A\mld C)\mlc(B \mld C)]\leadsto  X[(A\mlc B)\mld C]$. 
\item[Chand-Distribution:]  $X[(A\mld C)\adc^c (B \mld C)]\leadsto  X[(A\adc^c B)\mld C]$.
\item[Left Cleansing:] $X\bigl[Y[A]\adc^c C\bigr]\leadsto X\bigl[Y[A\adc^c B]\adc^c C\bigr]$.
\item[Right Cleansing:]  $X\bigl[C \adc^c Y[B]\bigr]\leadsto X\bigl[C \adc^c Y[A\adc^c B]\bigr]$. 
\item[Splitting:] $A,B\leadsto  A\adc^c B$,  where $\adc^c$ does not occur in either premise.   
\item[Trivialization:] $X[\top]\leadsto  X[Q^a\mld \neg Q^b]$,  where $Q$ is a pseudoelementary letter of the conclusion.
\item[Quadrilemma:] $X\Bigl[\Bigl(\bigl(A\mlc (C \adc^b D)\bigr) \adc^a \bigl( B\mlc (C \adc^b D)\bigr)\Bigr)\adc^c \Bigl(\bigl((A\adc^a B)\mlc C\bigr)
      \adc^b\bigl( (A\adc^a B)\mlc D\bigr)\Bigr)\Bigr] \leadsto$  $X[(A\adc^a B)\mlc (C \adc^b D)]$,  where $\adc^c$ does not occur in the conclusion. 
\item[Left Choosing:]  $X[A_1,\ldots,A_n]\leadsto  X[A_1 \add^c B_1,\ldots,A_n \add^c B_n]$, \ where $A_1\add^c B_1$, \ldots, $A_n \add^c B_n$ are all the $\add^c$-rooted subcirquents of the conclusion. 
\item[Right Choosing:] $X[B_1,\ldots,B_n]\leadsto  X[A_1 \add^c B_1,\ldots,A_n \add^c B_n]$,  \ where $A_1\add^c B_1$, \ldots, $A_n \add^c B_n$ are all the $\add^c$-rooted subcirquents of the conclusion.   
\item[Matching:] $X[Q^c,\neg Q^d]\leadsto X[P^a,\neg P^b]$, where $P$ is a non-pseudoelementary letter of the conclusion, and $Q$ is a letter  which does not occur in the conclusion.  
\end{description}

We will be using the word ``Commutativity'' as a common name of Por-Commutativity and Pand-Commutativity.  Similarly for other rules.

A {\bf proof} of a cirquent $A$ in $\cleighteen$ is a sequence $C_1,\ldots,C_n$ ($n\geq 1$) of cirquents such that $C_1=\top$, $C_n=A$ and, for each $i\in\{2,\ldots,n\}$, $C_{i}$ follows by one of the inference rules of $\cleighteen$ from some earlier cirquents of the sequence. Alternatively, a proof of $A$ in $\cleighteen$ can be understood as a tree of cirqunts with $A$ at its root and $\top$ at its leaves, where the cirquent of each parent node follows from the cirquents of the children nodes by one of the inference rules of $\cleighteen$. In either case, as we see,  
  $\top$ acts as the (only) {\bf axiom} of \cleighteen.  We agree that, throughout the rest of this article, the default meaning of the expression ``$A$ is {\bf provable}'' is ``$A$ is provable in $\cleighteen$'', i.e., ``there is a proof of $A$  in $\cleighteen$''.

\begin{example}{\em  While the cirquent $ (P^0\mlc P^1)\mld \neg P^2$ is unprovable,  its ``modification'' $(P^1\mlc P^1)\mld \neg P^{2}$ (where $P$ has been made pseudoelementary) can be proven as follows:}

\begin{center} 
\begin{picture}(100,27)
\put(42,14){\small  Axiom}
\put(72,17){\line(1,0){46}}
\put(-6,17){\line(1,0){46}}
\put(53,0){$\top$ }
\end{picture}
\end{center}

\begin{center} 
\begin{picture}(100,17)
\put(27,14){\small  Pand-Identity}
\put(86,17){\line(1,0){32}}
\put(-6,17){\line(1,0){32}}
\put(43,0){$\top\mlc\top$ }
\end{picture}
\end{center}

\begin{center} 
\begin{picture}(100,17)
\put(28,14){\small Trivialization}
\put(86,17){\line(1,0){32}}
\put(-6,17){\line(1,0){32}}
\put(0,0){$(P^{1}\mld \neg P^2)\mlc (P^1\mld \neg P^2)$ }
\end{picture}
\end{center}

\begin{center} 
\begin{picture}(100,27)
\put(20,24){\small Pand-Distribution}
\put(93,27){\line(1,0){25}}
\put(-6,27){\line(1,0){25}}
\put(20,10){$(P^1\mlc P^1)\mld \neg P^{2}$}
\end{picture}
\end{center}

{\em What makes  $(P^1\mlc P^1)\mld \neg P^{2}$ valid is that, semantically, it is virtually the same as  $P^1\mld \neg P^{2}$. $\top$ can easily win the latter (under whatever interpretation) by imitating, in $\neg P^2$,  the moves that $\bot$ makes in $ P^{1}$, and vice versa. This strategy essentially achieves  the effect of letting the adversary play against itself.  A similar strategy in not applicable in the case of $ (P^0\mlc P^1)\mld \neg P^2$ though, as $\neg P^2$  can be copycatted with either $P^0$ or $P^1$, but not both. Such a game can be lost if the adversary makes different moves in $P^0$ and $P^1$.}
\end{example}

\begin{example} {\em Here is a proof of another ``modification'' of the unprovable $ (P^0\mlc P^1)\mld \neg P^2$:}

\begin{center} 
\begin{picture}(130,27)
\put(5,14){\small Axiom\hspace{79pt} Axiom}
\put(34,17){\line(1,0){18}}
\put(-15,17){\line(1,0){18}}
\put(95,17){\line(1,0){18}}
\put(142,17){\line(1,0){18}}
\put(16,0){$\top\hspace{100pt}\top$ }
\end{picture}
\end{center}

\begin{center} 
\begin{picture}(130,17)
\put(-7,14){\small Trivialization\hspace{51pt} Trivialization}
\put(49,17){\line(1,0){8}}
\put(-16,17){\line(1,0){8}}
\put(94,17){\line(1,0){8}}
\put(156,17){\line(1,0){8}}
\put(0,0){$P^{0}\mld \neg P^2\hspace{67pt}P^1\mld \neg P^2$ }
\end{picture}
\end{center}

\begin{center} 
\begin{picture}(130,17)
\put(57,14){\small Splitting}
\put(93,17){\line(1,0){72}}
\put(-16,17){\line(1,0){72}}
\put(16,0){$(P^{0}\mld \neg P^2)\adc^3 (P^1\mld \neg P^2)$ }
\end{picture}
\end{center}

\begin{center} 
\begin{picture}(130,27)
\put(36,24){\small Chand-distribution}
\put(113,27){\line(1,0){52}}
\put(-16,27){\line(1,0){52}}
\put(36,10){$(P^0\adc^3 P^1)\mld \neg P^{2}$}
\end{picture}
\end{center}

\end{example}

\begin{example} \label{pluss}
{\em Below is a proof of a cirquentized version of Blass's \cite{Bla92} principle. In it, Rule$^+$ should be understood as an application of Rule in combination of whatever number of applications of Associativity and Commutativity rules. }

\begin{center} 
\begin{picture}(130,27)
\put(47,14){\small Axiom}
\put(80,17){\line(1,0){212}}
\put(-170,17){\line(1,0){212}}
\put(58,0){$\top$ }
\end{picture}
\end{center}

\begin{center} 
\begin{picture}(130,17)
\put(25,14){\small  Por-Domination$^+$ }
\put(98,17){\line(1,0){193}}
\put(-170,17){\line(1,0){193}}
\put(31,0){$ \top  \mld  (Q^7\mld R^3)$ }
\end{picture}
\end{center}

\begin{center} 
\begin{picture}(130,17)
\put(34,14){\small   Trivialization  }
\put(92,17){\line(1,0){200}}
\put(-170,17){\line(1,0){200}}
\put(11,0){$ ( S^8\mld \neg S^4)  \mld  (Q^7\mld R^3)$  }
\end{picture}
\end{center}

\begin{center} 
\begin{picture}(130,17)
\put(-23,14){\small  Por-Associativity$^+$, Por-Commutativity$^+$ }
\put(150,17){\line(1,0){142}}
\put(-170,17){\line(1,0){142}}
\put(3,0){$ (\neg R^3\mld \neg S^4)  \mld  (Q^7\mld S^8)$ }
\end{picture}
\end{center}

\begin{center} 
\begin{picture}(130,17)
\put(31,14){\small   Pand-Identity$^+$  }
\put(95,17){\line(1,0){196}}
\put(-170,17){\line(1,0){196}}
\put(-17,0){$ \top \mlc\bigl((\neg R^3\mld \neg S^4)  \mld  (Q^7\mld S^8)\bigr)$}
\end{picture}
\end{center}

\begin{center} 
\begin{picture}(130,17)
\put(27,14){\small   Por-Domination  }
\put(95,17){\line(1,0){196}}
\put(-170,17){\line(1,0){196}}
\put(-49,0){$\bigl((P^5\mld \neg S^4)  \mld \top \bigr)\mlc\bigl((\neg R^3\mld \neg S^4)  \mld  (Q^7\mld S^8)\bigr)$ }
\end{picture}
\end{center}

\begin{center} 
\begin{picture}(130,17)
\put(34,14){\small   Trivialization  }
\put(92,17){\line(1,0){200}}
\put(-170,17){\line(1,0){200}}
\put(-69,0){$\bigl((P^5\mld \neg S^4)  \mld (R^6\mld \neg R^3)\bigr)\mlc\bigl((\neg R^3\mld \neg S^4)  \mld  (Q^7\mld S^8)\bigr)$  }
\end{picture}
\end{center}

\begin{center} 
\begin{picture}(130,17)
\put(-23,14){\small  Por-Associativity$^+$, Por-Commutativity$^+$ }
\put(150,17){\line(1,0){142}}
\put(-170,17){\line(1,0){142}}
\put(-69,0){$\bigl((\neg R^3\mld \neg S^4)  \mld (P^5\mld R^6)\bigr)\mlc\bigl((\neg R^3\mld \neg S^4)  \mld(Q^7\mld S^8)\bigr)$  }
\end{picture}
\end{center}

\begin{center} 
\begin{picture}(130,17)
\put(21,14){\small  Pand-Distribution$^+$}
\put(100,17){\line(1,0){190}}
\put(-170,17){\line(1,0){190}}
\put(-35,0){$(\neg R^3\mld \neg S^4)  \mld\bigl((P^5\mld R^6)\mlc(Q^7\mld S^8)\bigr)$  }
\end{picture}
\end{center}

\begin{center} 
\begin{picture}(130,17)
\put(31,14){\small  Pand-Identity$^+$}
\put(95,17){\line(1,0){196}}
\put(-170,17){\line(1,0){196}}
\put(-58,0){ $\top \mlc \Bigl( (\neg R^3\mld \neg S^4)  \mld\bigl((P^5\mld R^6)\mlc(Q^7\mld S^8)\bigr)\Bigr)$ }
\end{picture}
\end{center}

\begin{center} 
\begin{picture}(130,17)
\put(27,14){\small   Por-Domination}
\put(95,17){\line(1,0){196}}
\put(-170,17){\line(1,0){196}}
\put(-88,0){ 
$\bigl((\neg P^1\mld S^8) \mld \top\bigr)\mlc \Bigl( (\neg R^3\mld \neg S^4)  \mld\bigl((P^5\mld R^6)\mlc(Q^7\mld S^8)\bigr)\Bigr)$ }
\end{picture}
\end{center}

\begin{center} 
\begin{picture}(130,17)
\put(34,14){\small   Trivialization}
\put(92,17){\line(1,0){200}}
\put(-170,17){\line(1,0){200}}
\put(-105,0){ 
$\bigl((\neg P^1\mld S^8) \mld(Q^7\mld \neg Q^2 )\bigr)\mlc \Bigl( (\neg R^3\mld \neg S^4)  \mld\bigl((P^5\mld R^6)\mlc(Q^7\mld S^8)\bigr)\Bigr)$ }
\end{picture}
\end{center}

\begin{center} 
\begin{picture}(130,17)
\put(-23,14){\small  Por-Associativity$^+$, Por-Commutativity$^+$}
\put(150,17){\line(1,0){142}}
\put(-170,17){\line(1,0){142}}
\put(-105,0){   $\bigl((\neg P^1\mld\neg Q^2) \mld(Q^7\mld S^8)\bigr)\mlc \Bigl( (\neg R^3\mld \neg S^4)  \mld\bigl((P^5\mld R^6)\mlc(Q^7\mld S^8)\bigr)\Bigr)$  }
\end{picture}
\end{center}

\begin{center} 
\begin{picture}(130,17)
\put(30,14){\small   Pand-Identity$^+$}
\put(94,17){\line(1,0){197}}
\put(-170,17){\line(1,0){197}}
\put(-121,0){ $\Bigl(\top\mlc\bigl((\neg P^1\mld\neg Q^2) \mld(Q^7\mld S^8)\bigr)\Bigr) \mlc \Bigl( (\neg R^3\mld \neg S^4)  \mld\bigl((P^5\mld R^6)\mlc(Q^7\mld S^8)\bigr)\Bigr)$}
\end{picture}
\end{center}

\begin{center} 
\begin{picture}(130,17)
\put(24,14){\small   Por-Domination$^+$}
\put(98,17){\line(1,0){193}}
\put(-170,17){\line(1,0){193}}
\put(-148,0){$\Bigl(\bigl(\top\mld(\neg Q^2\mld R^6)\bigr)\mlc\bigl((\neg P^1\mld\neg Q^2) \mld(Q^7\mld S^8)\bigr)\Bigr) \mlc \Bigl( (\neg R^3\mld \neg S^4)  \mld\bigl((P^5\mld R^6)\mlc(Q^7\mld S^8)\bigr)\Bigr)$ }
\end{picture}
\end{center}

\begin{center} 
\begin{picture}(130,17)
\put(34,14){\small  Trivialization}
\put(92,17){\line(1,0){200}}
\put(-170,17){\line(1,0){200}}
\put(-170,0){$\Bigl(\bigl((P^5\mld\neg P^1)\mld(\neg Q^2\mld R^6)\bigr)\mlc\bigl((\neg P^1\mld\neg Q^2) \mld(Q^7\mld S^8)\bigr)\Bigr) \mlc \Bigl( (\neg R^3\mld \neg S^4)  \mld\bigl((P^5\mld R^6)\mlc(Q^7\mld S^8)\bigr)\Bigr)$  }
\end{picture}
\end{center}

\begin{center} 
\begin{picture}(130,17)
\put(-22,14){\small  Por-Associativity$^+$, Por-Commutativity$^+$}
\put(144,17){\line(1,0){145}}
\put(-170,17){\line(1,0){145}}
\put(-170,0){  $\Bigl(\bigl((\neg P^1\mld\neg Q^2) \mld (P^5\mld R^6)\bigr)\mlc\bigl((\neg P^1\mld\neg Q^2) \mld(Q^7\mld S^8)\bigr)\Bigr) \mlc \Bigl( (\neg R^3\mld \neg S^4)  \mld\bigl((P^5\mld R^6)\mlc(Q^7\mld S^8)\bigr)\Bigr)$ }
\end{picture}
\end{center}

\begin{center} 
\begin{picture}(130,17)
\put(22,14){\small  Pand-Distribution}
\put(100,17){\line(1,0){190}}
\put(-170,17){\line(1,0){190}}
\put(-130,0){ $\Bigl((\neg P^1\mld\neg Q^2) \mld\bigl((P^5\mld R^6)\mlc(Q^7\mld S^8)\bigr)\Bigr) \mlc \Bigl( (\neg R^3\mld \neg S^4)  \mld\bigl((P^5\mld R^6)\mlc(Q^7\mld S^8)\bigr)\Bigr)$ }
\end{picture}
\end{center}

\begin{center} 
\begin{picture}(130,17)
\put(22,14){\small  Pand-Distribution}
\put(100,17){\line(1,0){190}}
\put(-170,17){\line(1,0){190}}
\put(-70,0){ $\bigl((\neg P^1\mld\neg Q^2\bigr)\mlc\bigl(\neg R^3\mld \neg S^4) \bigr)\mld\bigl((P^5\mld R^6)\mlc(Q^7\mld S^8)\bigr)$ }
\end{picture}
\end{center}

\begin{center} 
\begin{picture}(130,27)
\put(20,24){\small Matching (3 times)}
\put(100,27){\line(1,0){190}}
\put(-170,27){\line(1,0){190}}
\put(-70,10){ $\bigl((\neg P^1\mld\neg P^2\bigr)\mlc\bigl(\neg P^3\mld \neg P^4) \bigr)\mld\bigl((P^5\mld P^6)\mlc(P^7\mld P^8)\bigr)$}
\end{picture}
\end{center}

\end{example}

\section{The preservation lemma}
\begin{lemma}\label{pres}

1.  Each application $\vec{A}\leadsto B$ of any of the  rules of $\cleighteen$ preserves validity in the premise-to-conclusion direction, i.e., if all premises from $\vec{A}$ are valid, then so is the conclusion $B$.

2.  Each application $\vec{A}\leadsto B$  of any of the  rules of $\cleighteen$ other than Choosing and Matching also preserves validity in the conclusion-to-premise direction, i.e.,  if $B$ is valid, then so are all premises from $\vec{A}$.

\end{lemma}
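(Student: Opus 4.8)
The plan is to handle the rules in two groups, according to how tightly the premise and conclusion games are coupled under an arbitrary interpretation. The pivotal observation is the one recorded after Definition~\ref{clsem}: for a fixed interpretation $^*$ the legal-run set $\legal{C^*}{}$ does not depend on $C$. Hence, to compare two cirquents under $^*$ it is enough to compare the winning functions $\win{C^*}{}$ and $\win{D^*}{}$ on the one common set of legal runs. I would first record a \emph{congruence lemma}: call $D,E$ \emph{strongly equivalent} if $\win{D^*}{}\seq{\Gamma}=\win{E^*}{}\seq{\Gamma}$ for every interpretation $^*$ and every legal $\Gamma$; then, by an easy induction on the context $X[\,]$ following the five clauses of the $\win{}{}$ part of Definition~\ref{clsem} (at a choice node only the presence or absence of the relevant $\bot\adc^c.i$ matters, which is a property of $\Gamma$ alone), $X[D]$ and $X[E]$ are strongly equivalent as well. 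Strongly equivalent cirquents induce literally the same game under each $^*$, so they are valid together, which settles both directions of the lemma at once for any rule whose rewriting replaces a subcirquent by a strongly equivalent one.

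I would then discharge the bulk of the rules by the short $\win{}{}$-clause case analyses establishing such local strong equivalences: Commutativity and Associativity (symmetry/associativity of $\mld,\mlc$); Identity (a $\bot$-disjunct and a $\top$-conjunct are neutral); Domination (a $\top$-disjunct forces $\mld$ to be won, dually a $\bot$-conjunct forces $\mlc$ to be lost); Pand-Distribution (the Boolean law $(a\vee c)\wedge(b\vee c)=(a\wedge b)\vee c$, legitimate because the two occurrences of the shared $C$ make ``win $C$'' a single event); Chand-Distribution (a case split on the shared $\adc^c$-resolution, again with ``win $C$'' single); and both Cleansing rules (the inner and outer $\adc^c$ carry the same cluster, so a resolved run collapses the inner choice to the resolvent or to $C$, while an unresolved run makes the whole thing $\top$-won). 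All of these satisfy Parts~1 and~2 simultaneously.

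The remaining rules demand explicit strategy constructions, argued in the relaxed simulation style licensed by staticity (Fact~\ref{stg}) after Definition~\ref{sgd}. Splitting and Choosing are routine: for $A,B\leadsto A\adc^c B$ the machine waits for $\bot$ to resolve the fresh $c$ and then runs the solution of the chosen premise, an unresolved $\adc^c$ being won automatically; conversely it feeds the solution of $A\adc^c B$ a fake $\bot\adc^c.0$ (resp. $\bot\adc^c.1$), freshness of $c$ ruling out interference; and for Choosing (Part~1) the machine makes $\top\add^c.0$ (resp. $\top\add^c.1$) at once, collapsing the conclusion to the premise. For Quadrilemma Part~2 the machine runs the conclusion solution on whichever ``organization'' $\bot$ selects: once the shared cluster $a$ (resp. $b$) is resolved the residual game coincides with a conjunct of the conclusion, and when $a$ (resp. $b$) is left unresolved the organization is trivially won. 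For Matching Part~1 I would exploit that the premise solution wins under \emph{all} interpretations, in particular the one sending the fresh $Q$ to $P^*$; since every occurrence of $P^a$ (resp. $\neg P^b$) in the conclusion is a replaced occurrence, cluster $a$ (resp. $b$) denotes an independent copy of $P^*$, so the premise game under this interpretation is a mere relabeling ($Q^c\leftrightarrow P^a$, $\neg Q^d\leftrightarrow\neg P^b$) of the conclusion game and the strategy transfers. The conclusion-to-premise direction fails for Choosing (the conclusion solution may commit to the $B_i$-side) and for Matching (the conclusion solution may exploit sharing among the $P$-copies that is lost when $Q$ is made independent), which is exactly why these two rules are excluded from Part~2.

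The main obstacles are Quadrilemma Part~1 and Trivialization, where a given solution must be fused with an auxiliary strategy. For Quadrilemma Part~1 the machine must win $(A\adc^a B)\mlc(C\adc^b D)$ from a premise solution whose fresh outer $\adc^c$ lets $\bot$ choose an organization; since $a,b$ are shared between conclusion and both organizations, the machine observes which of $a,b$ the environment resolves first and feeds its simulated premise solution the matching fake $\bot\adc^c.0$ or $\bot\adc^c.1$, thereafter relaying in-literal and remaining choice moves transparently, the shared clustering making the four outcomes agree and an unresolved $a$- or $b$-cluster only helping. For Trivialization Part~1 the machine combines the given $X[\top]$-solution with a copy-cat that mirrors moves between the unique positive run $\Gamma^{Q^a.}$ and the unique negative run $\Gamma^{\neg Q^b.}$; pseudoelementarity of $Q$ guarantees these are the only $Q$-runs, so the mirror is well defined and keeps every $Q^a\mld\neg Q^b$ in the hole $\top$-won, i.e.\ behaving as the $\top$ it replaced, while Part~2 instead specializes $Q$ to the elementary $\top$, making the hole strongly equivalent to $\top\mld\bot$ and hence to $\top$. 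Throughout these constructions the correctness proof rests on staticity, which absorbs the scheduling discrepancies introduced by simulating one machine inside another.
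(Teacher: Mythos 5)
Your proposal follows essentially the same route as the paper: the same partition into rules that induce literally identical games under every interpretation (Commutativity through Cleansing) versus rules needing explicit strategy constructions, the same copycat construction for Trivialization, the same fresh-letter interpretation transfer for Matching, and the same simulation/monotonicity arguments for Splitting, Choosing and Quadrilemma. The one place where the paper works noticeably harder than your sketch suggests is the premise-to-conclusion direction of Trivialization, where ``the hole behaves as the $\top$ it replaced'' is cashed out via an auxiliary interpretation $^+$ (sending $Q$ to $\top$ or $\bot$ according to the actual run generated) and an induction on subcirquents transferring wins of $X[\top]$ under $^+$ to wins of $X[Q^a\mld\neg Q^b]$ under $^*$ --- needed in particular because the pseudoelementary literals $Q^a$ and $\neg Q^b$ may also occur outside the displayed disjunction.
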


\begin{proof}
Since $\cleighteen$ builds upon $\clsixteen$, the parts of the  present proof dealing with rules other than Trivialization and Matching are essentially reproductions  from the proof of a similar lemma for $\clsixteen$ found in \cite{ifcolog}. Of course, this does not extend to the entire proof. $\clsixteen$ simply has no counterpart for Matching. And, while it has a same-name and similar-looking rule of Trivialization, the two versions of this rule are entirely different in their essence due to one dealing with atoms interpreted as elementary games and the other  with atoms standing for arbitrary static games.

As in \cite{ifcolog}, throughout this and some later proofs, when trying to show that a given machine $\cal H$ is a solution of a given game $G$, we implicitly rely on what is called the ``clean environment assumption'' (\cite{iran}). According to it, $\cal H$'s environment never makes moves that are not legal moves of $G$ --- that is, a run is always $\bot$-legal. Assuming that this condition is satisfied is legitimate, because, if $\cal H$'s environment makes the run $\bot$-illegal, then $\cal H$ has nothing to answer for and wins immediately. 

{\em Commutativity}, {\em Associativity}, {\em Identity}, {\em Domination}, {\em Distribution} and {\em Cleansing}.   It is not hard to see that, for whatever interpretation $^*$, if $E\leadsto F$ is an application of any of these rules,  then $E^*$  and $F^*$ are identical as games. This means that a logical solution of $E$ is automatically a logical solution of $F$, and vice versa. The least straightforward case here is Cleansing, so let us just look at this rule.  Consider an application $X\bigl[Y[A]\adc^c C\bigr]\leadsto X\bigl[Y[A\adc^c B]\adc^c C\bigr]$ of (Left) Cleansing, an arbitrary interpretation $^*$ and an arbitrary $^*$-legal run $\Gamma$. We want to show that $\Gamma$ is a $\top$-won run of $E^*$ iff it is a $\top$-won run of $F^*$. If $\adc^c$ is $\Gamma$-unresolved, then $\Gamma$ is a $\top$-won run of both the $Y[A\adc^c B]\adc^c C$ component of the conclusion  and the  $Y[A]\adc^c C$ component of the premise. Since the conclusion and the premise only differ in that one has $Y[A\adc^c B]\adc^c C$ where the other has $Y[A]\adc^c C$, we find that $\Gamma$ is a $\top$-won run of both games or neither. Now assume $\adc^c$ is resolved, i.e., $\Gamma$ contains the labmove $\bot\adc^c\hspace{-2pt}.i$ ($i=0$ or $i=1$). If $i=1$, then $\Gamma$ is a $\top$-won run of  $X[Y[A\adc^c B]\adc^c C]$  iff it is a $\top$-won run of $X[C]$  iff it is a $\top$-won run of $X[Y[A]\adc^c C]$.  And if $i=0$, then $\Gamma$ is a $\top$-won run of 
$ X\bigl[Y[A\adc^c B]\adc^c C\bigr]$ iff it is a $\top$-won run of $ X\bigl[Y[A\adc^c B]\bigr]$ iff it is a $\top$-won run of $ X\bigl[Y[A]\bigr]$ iff it is a $\top$-won run of $ X\bigl[Y[A\adc^c C]\bigr]$. In either case, as we see, $\Gamma$ is a   $\top$-won run of the conclusion  iff it is a   $\top$-won run of the premise.

 {\em Splitting}. Consider an application $A_0,A_1\leadsto  A_0\adc^c A_1$ of this rule. 

{\bf (i)} For the {\em premises-to-conclusion direction}, assume  HPMs ${\cal M}_0$  and ${\cal M}_1$  are logical solutions of $A_0$  and $A_1$, respectively. 
Let $\cal H$ be an HPM that, at the beginning of the play, waits till the environment makes the move $\adc^c\hspace{-1pt}.i$ for $i=0$ or $i=1$. After that, $\cal H$ starts simulating the machine ${\cal M}_i$.  In this process, $\cal H$  simultaneously monitors its own run tape as well the imaginary run tape of ${\cal M}_i$, reading the contents of those two tapes in the left-to-right direction so that it does not process the same labmove twice.   Every time such monitoring detects that ${\cal M}_i$ has made a new (not-yet-seen) move, $\cal H$ makes the same move in the real play; and every time the monitoring detects that $\cal H$'s environment has made a new move, $\cal H$ appends the same ($\bot$-labeled) move to the content of the imaginary run tape of ${\cal M}_i$. 
 In more relaxed and intuitive terms, what we just said about the actions of $\cal H$ can be put as ``$\cal H$ plays exactly like ${\cal M}_i$ would play''.   Later, in similar situations, we may describe and analyze HPMs in relaxed terms, without going into technical details of simulation and without perhaps even using the word ``simulation''. Since we  exclusively deal  with static games  (Fact \ref{stg}), this relaxed approach is safe and valid. Namely, in our present case, a run $\Gamma$ generated by $\cal H$ can easily be seen to be a $\top$-delay of the run $\Upsilon$ generated by ${\cal M}_i$ in the corresponding scenario, which, due to the games being static, is just as good as---or ``even better than''---if $\Gamma$ and $\Upsilon$ were identical:  if $\Upsilon$ is a $\top$-won run of $A_{i}^{*}$, then so is $\Gamma$, which, due to the presence of the labmove $\bot \adc^c\hspace{-1pt}$, makes $\Gamma$ also a $\top$-won run of $(A_0\adc^c\hspace{-1pt} A_1)^*$. But $\Upsilon$ is indeed a $\top$-won run of $A_{i}^{*}$, because ${\cal M}_i$ is a logical solution of $A_i$. Hence, $\Gamma$ is guaranteed to be a solution of  $(A_0\adc^c\hspace{-1pt} A_1)^*$ and hence, as $^*$ was arbitrary, a logical solution of $A_0\adc^c A_1$. 

{\bf (ii)} For the {\em conclusion-to-premises} direction, assume $\cal H$ is a logical solution of $A_0\adc^c\hspace{-1pt} A_1$. Let ${\cal M}_0$ (resp. ${\cal M}_1$) be an HPM that plays just like $\cal H$ would play in the scenario where, at the very start of the play, $\cal H$'s adversary made the move $\adc^c\hspace{-1pt}.0$ (resp. $\adc^c\hspace{-1pt}.1$). Obviously  ${\cal M}_0$ and ${\cal M}_1$   are logical solutions of $A_0$ and $A_1$, respectively.

{\em Trivialization}. Consider an application $X[\top]\leadsto  X[Q^a\mld \neg Q^b]$ of this rule. Let us call moves of the form $Q^a.\alpha$ and $\neg Q^b.\alpha$ (for whatever $\alpha$) {\bf privileged}.  

{\bf (i)} For the premise-to-conclusion direction, assume $\cal M$ is a logical solution of the premise. Let ${\cal H}$ be an HPM that makes exactly the same non-privileged moves and in the same order as $\cal M$ does in the scenario where the imaginary environment of the latter: (i) never makes privileged moves, and (ii)  makes exactly the same non-privileged moves and in the same order as ${\cal H}$'s real environment does. As for privileged moves,  $\cal H$ applies copycat between the $Q^a$ and $\neg Q^b$ components: in parallel with acting as just described,  $\cal H$ keeps scanning its run tape and every time it sees a new labmove $\bot Q^a.\alpha$ 
on its run tape, it makes the move $\neg Q^b.\alpha$, and vice versa: whenever it sees $\bot \neg Q^b.\beta$ on the run tape, it makes the move  
$Q^a.\beta$. Let $\Gamma$ be a run generated by ${\cal H}$ this way, and $\Upsilon$ the run generated by $\cal M$ in the above scenario of $\cal M$'s environment never making priviledged moves but otherwise acting exactly like $\cal H$'s environment acted when $\Gamma$ was generated. 
Consider an arbitrary interpretation $^*$. Let $^+$ be the interpretation such that:
\begin{itemize}
  \item If $\Gamma^{Q^{a}\hspace{-2pt}.}$ is a $\top$-won run of $Q^*$, then $Q^+=\top$, and otherwise $Q^+=\bot$;
  \item $^+$ agrees with $^*$ on all other (other than $Q$) letters.
\end{itemize}
By the clean environment assumption, $\Gamma$ is a $\bot$-legal run of the conclusion under $^*$. With a moment's thought, this can be seen to imply that 
$\Upsilon$ is a $\bot$-legal run of the premise under $^+$. This means that 
\begin{equation}\label{hua}
\mbox{\em $\Upsilon$ is a $\top$-won legal run of the premise under $^+$},
\end{equation}
 because  $\Upsilon$ is a run generated by the logical solution $\cal M$ of the premise. The just-established (in (\ref{hua})) fact that 
$\Upsilon$ is a $^+$-legal run, 
 again with a moment's thought,  can also be seen to guarantee that $\Gamma$ is $^*$-legal. Keeping this in mind, our goal now is to show that $\Gamma$ is a $\top$-won run of the conclusion under $^*$. 

By the definition of $\neg$, either $\Gamma^{Q^a\hspace{-2pt}.}$ is a $\top$-won run of $Q^*$  or 
$\overline{\Gamma^{Q^a\hspace{-1pt}.}}$ is a $\top$-won run of $\neg Q^*$.
But the above-described copycat subroutine obviously guarantees that $\Gamma^{\neg Q^b\hspace{-1pt}.}$ is a $\top$-delay of $\overline{\Gamma^{Q^a\hspace{-1pt}.}}$. Hence, due to the static property of $\neg Q^*$, we have:
\begin{equation}\label{eq5.5}
\mbox{\em Either $\Gamma^{Q^a\hspace{-1pt}.}$ is a $\top$-won run of $Q^*$,  or $\Gamma^{\neg Q^b\hspace{-1pt}.}$ 
is a $\top$-won run of $\neg Q^*$. }\end{equation} 
The above immediately implies that 
\begin{equation}\label{shou}
\mbox{\em $\Gamma$ is a $\top$-won run of $Q^a\mld \neg Q^b$ under $^*$.}\end{equation} 

For any subcirquent $A$ of the conclusion, we define $\underline{A}$  as the result of substituting $\top$ for every occurrence (if any) of $Q^a\mld\neg Q^b$ in $A$. We claim the following: 

\begin{equation}\label{shou2}
\mbox{\em For any subcirquent $A$ of the conclusion, if $\win{\underline{A}^+}{}\seq{\Upsilon}=\top$, then $\win{A^*}{}\seq{\Gamma}=\top$.}\end{equation} 
This claim can be proven by induction on the complexity of $A$. Assume $\win{\underline{A}^+}{}\seq{\Upsilon}=\top$. 
If $A$ is a positive $Q$-lettered literal, then, due to the fact that $Q$ is pseudoelementary, we have $A=Q^a$. 
Hence, by our choice of $^+$, $\win{A^*}{}\seq{\Gamma}=\top$. With (\ref{eq5.5}) in mind, the case of $A$ being a negative $Q$-lettered literal is similar, so we also have $\win{\neg A^*}{}\seq{\Gamma}=\top$. If $A$ is a non-$Q$-lettered literal, then $A^+=A^*$ and, in view of the way $\Gamma$ was generated together with the static property of $A^*$, it is obvious that $\win{A^*}{}\seq{\Gamma}=\top$. If $A$ is 
$Q^a\mld \neg Q^b$, then $\win{A^*}{}\seq{\Gamma}=\top$ is guaranteed by (\ref{shou}). Finally, if $A$ is any 
other non-literal cirquent, then $\win{A^*}{}\seq{\Gamma}=\top$ can easily be seen to follow from the corresponding inductive hypothesis and 
the way $\Gamma$ was generated. This completes our proof of claim (\ref{shou2}). The latter,  in conjunction with (\ref{hua}), immediately implies that $\Gamma$ is a $\top$-won run of the conclusion under $^*$. Since $^*$ was an arbitrary interpretation and $\Gamma$  an arbitrary run generated by $\cal H$, we conclude that $\cal H$ is a logical solution of $X[Q^a\mld \neg Q^b]$. So, as desired, $X[Q^a\mld \neg Q^b]$ is logically valid.

{\bf (ii)}: The conclusion-to-premise direction is much simpler: just observe that, under whatever interpretation $^*$,  if $\Gamma$ is a $\top$-won  run of the conclusion, then it is also (``even more so'') a $\top$-won run of the premise, meaning that any logical solution of the conclusion is also a logical solution of the premise. 

{\em Quadrilemma}. Consider an application of this rule with premise (\ref{prm}) and conclusion (\ref{cnc}):
\begin{equation}\label{prm}
X\Bigl[\Bigl(\bigl(A\mlc (C \adc^b D)\bigr) \adc^a \bigl( B\mlc (C \adc^b D)\bigr)\Bigr)\adc^c \Bigl(\bigl((A\adc^a B)\mlc C\bigr)
      \adc^b\bigl( (A\adc^a B)\mlc D\bigr)\Bigr)\Bigr]
\end{equation}
\begin{equation}\label{cnc}
 X[(A\adc^a B)\mlc (C \adc^b D)].
\end{equation}

{\bf (i)} For the premise-to-conclusion direction,  assume  $\cal M$ is a logical solution of (\ref{prm}). Let $\cal H$ be an HPM that, until  it sees that its environment has made either $\adc^a$ or $\adc^b$ resolved, plays just like $\cal M$ would play in the scenario where $\adc^c$ is not (yet) resolved but otherwise $\cal M$'s imaginary environment is making the same moves as $\cal H$'s real environment is making. If and when it sees that a move $\adc^a\hspace{-1pt}.i$ (resp. $\adc^b\hspace{-1pt}.i$) has been made by its environment, $\cal H$  imagines that $\cal M$'s environment has correspondingly made not only the same move $\adc^a\hspace{-1pt}.i$ (resp. $\adc^b\hspace{-1pt}.i$) but also  $\adc^c\hspace{-1pt}.0$  (resp. $\adc^a\hspace{-1pt}.1$), and  continues playing exactly as $\cal M$ would continue playing in that case. With a little thought, $\cal H$ can be seen to be a logical solution of the conclusion.

{\bf (ii)} For the conclusion-to-premise direction,  observe that, under any interpretation,  any $\top$-won run of (\ref{cnc})   is also (``even more so'') a $\top$-won run of the premise. Hence a logical  solution of the conclusion  is automatically also a  logical solution of the premise. 

{\em Choosing}. Here we only need to show the preservation of logical validity in the premise-to-conclusion direction. Consider an application $X[A_1,\ldots,A_n]\leadsto  X[A_1 \add^c B_1,\ldots,A_n \add^c B_n]$ of Left Choosing, and assume $\cal M$ is a logical 
solution of the premise. Let ${\cal H}$ be an HPM that, at the beginning of the game, makes the move $\add^c\hspace{-2pt}.0$, after which it plays exactly as 
$\cal M$ would. Obviously $\cal H$ 
is a logical solution of the conclusion. Right Choosing will be handled in a similar way. 

{\em Matching}. Again, we only need to consider the premise-to-conclusion direction. Pick an application $X[Q^c,\neg Q^d]\leadsto X[P^a,\neg P^b]$ of this rule. Assume $\cal M$ is a logical solution of the premise. We claim that the same $\cal M$ is a logical solution of the conclusion. Indeed, consider any interpretation $^*$. We want to show that ${\cal M}\models(X[P^a,\neg P^b])^*$. 
Let $^+$ be the interpretation such that $Q^+=P^*$ and $^+$ agrees with $^*$ all other (than $Q$) letters. Notice that, since $X[P^a,\neg P^b]$ does not contain $Q$, we have $(X[Q^c,\neg Q^d])^+=(X[P^a,\neg P^b])^*$. Since $\cal M$ is a logical solution of $X[Q^c,\neg Q^d]$, we have ${\cal M}\models (X[Q^c,\neg Q^d])^+$. Hence also 
${\cal M}\models (X[P^a,\neg P^b])^*$. 
\end{proof}

\begin{remark}\label{rfeb5} {\em Lemma \ref{pres} states the existence of certain solutions. A look back at our proof of the lemma reveals that, in fact, this existence is constructive. Namely, in the case of clause (a) of Lemma \ref{pres}, for any given rule, there is a $^*$-independent effective procedure that extracts an HPM $\cal M$ from the premise(s), the conclusion and HPMs that purportedly solve the premises under $^*$; as long as these purported solutions are indeed solutions, $\cal M$ is a solution of the conclusion under $^*$. Similarly for clause (b). }
\end{remark} 

\section{Soundness and completeness}

We start this section with some terminological conventions. Two literals are said to be {\bf same-lettered} iff both are $P$-lettered for some (the same) letter $P$. Next, two literals are {\bf opposite} iff one is positive, the other is negative, and they are same-lettered.

A {\bf disjunctively normal} subcirquent of a given cirquent $C$ is a subcirquent that is (i) either $\adc$-rooted, (ii) or  of the form $A_1\mld\ldots\mld A_n$ ($n\geq 1$), where each $A_i$ either is a literal or is $\add$-rooted;
 besides, there are no opposite non-pseudoelementary literals  of $C$  among $A_1,\ldots,A_n$.

A {\bf conjunctively normal} cirquent is one  of the form $A_1\mlc\ldots\mlc A_n$ ($n\geq 1$), where each $A_i$  is disjunctively normal, besides, at least one of the $A_i$ is not $\adc$-rooted.

A  {\bf surface subcirquent} of a given cirquent $C$ is a subcirquent $S$ such that $S$ has at least one  occurrence in $C$ that is not in the scope of a choice connective.

We say that a cirquent $C$ is {\bf pure} iff $C$ is either conjunctively normal, or is $\top$, or is $\bot$,  or has the form $A\adc^c B$, where neither $A$ nor $B$ contains  $\adc^c$. 

The following procedure takes a cirquent $D$ and, after a chain of inference rule applications in the conclusion-to-premise direction, returns a pure cirquent. 
\vspace{10pt}

{\em Procedure} {\bf Purification} applied to a cirquent $D$: If none of the following eight steps is applicable (because the corresponding ``if\ldots'' condition is not satisfied), return $D$; otherwise, apply any applicable step, change the value of $D$ to the result of such an application, and repeat the present procedure.\vspace{5pt} 

{\em Step 1}: If $D$ has a surface subcirquent of the form $(A\mlc B)\mld C$ or $C\mld(A\mlc B)$, change that subcirquent to $(A\mld C)\mlc(B\mld C)$ using Pand-Distribution perhaps in combination with Por-Commutativity.  

{\em Step 2}: If $D$ has a surface subcirquent of the form $(A\adc^c B)\mld C$ or $C\mld(A\adc^c B)$, change that subcirquent to $(A\mld C)\adc^c(B\mld C)$ using Chand-Distribution perhaps in combination with Por-Commutativity.

{\em Step 3}: If $D$ has a surface subcirquent of the form $A_1\mld\ldots\mld A_n$  ($n\geq 2$)  and there are opposite pseudoelementary literals of $D$ among $A_1,\ldots, A_n$, change that subcirquent to $\top$ using Trivialization, perhaps in combination with  Por-Commutativity, Por-Associativity and Por-Domination.   

{\em Step 4}: If $D$ has a surface subcirquent of the form $A_1\mld\ldots\mld A_n$ ($n\geq 2$) and $\top$ is among $A_1,\ldots,A_n$,  change that subcirquent to $\top$ using Por-Domination, perhaps in combination with Por-Commutativity and perhaps several times.

{\em Step 5}: If $D$ has a surface subcirquent of the form $A_1\mlc\ldots\mlc A_n$ ($n\geq 2$) and $\bot$ is among $A_1,\ldots,A_n$,  change that subcirquent to $\bot$ using Pand-Domination, perhaps in combination with Pand-Commutativity amd perhaps several times.   

{\em Step 6}: If $D$ has a surface subcirquent of the form $\top \mlc A$,   $A\mlc\top$,  $\bot \mld A$ or $A\mld\bot$, change that subcirquent to $A$ using Pand-Identity or Por-Identity perhaps in combination with Pand-Commutativity or Por-Commutativity.  

{\em Step 7}: If $D$ has a surface subcirquent of the form 
$(A\adc^a B)\mlc (C\adc^b D)$, change that subcirquent to $\Bigl(\bigl(A\mlc(C\adc^j D)\bigr)\adc^a \bigl(B\mlc(C\adc^b D)\bigr)\Bigr)\adc^c \Bigl(\bigl((A\adc^a B)\mlc C\bigr)\adc^b \bigl((A\adc^a B)\mlc D\bigr)\Bigr) $ (where $c$ is a cluster not present in $D_6$) using Quadrilemma.

{\em Step 8}: If $D$ is of the form $A[E\adc^c F]\adc^c B$  (resp. $A\adc^cB[E\adc^c F]$), change it to $A[E]\adc^c B$ (resp. $A\adc^c B[F]$) using Cleansing.  
\vspace{10pt}

It follows from clause 1 of Lemma \ref{terminate} that the above procedure always terminates.   We call the final cirquent resulting from $D$ according to the above procedure the {\bf purification} of $D$.

Below we use the standard notation $^na$ (``tower of $a$'s of height $n$'') for tetration, defined inductively by $^1 a= a$ and $^{n+1}a=a^{(^na)}$.  So, for instance, $^35=5^{5^5}$. In the definition below, in the role of $a$, we need a ``sufficiently large'' number and, for safety, we choose $5$ as such a number, even if a smaller number might also be sufficient for our purposes. 
\begin{definition}\label{rankdef}
The {\bf rank}   of a cirquent $C$ is the number $\overline{C}$ defined as follows:

\begin{enumerate}
  \item $\overline{C}=1$ if $C$ is a literal, or $\top$, or $\bot$;
 \item $\overline{C} =\overline{A} +\overline{B} $ if $C$ is $A\add^c B$ or $A\adc^c B$;
  \item $\overline{C} =5^{\overline{A} +\overline{B} }$ if $C$ is $A\mlc B$;
  \item $\overline{C} $ is $^{\overline{A} +\overline{B} }5$ if $C$ is $A\mld B$.
\end{enumerate}
\end{definition}

\begin{lemma}\label{terminate}  
Each step of the Purification procedure (when applicable):     \begin{enumerate} 
\item  strictly decreases the rank of $D$, and
\item does not increase the number of non-pseudoelementary literals of $D$.  \end{enumerate}  
\end{lemma}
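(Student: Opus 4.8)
The plan is to prove the two clauses separately, each by reducing to a single uniform observation together with a short inspection of the eight steps. For clause~1 the first move is to record a \emph{monotonicity principle}: if $\overline{S'}<\overline{S}$, then replacing one occurrence of $S$ inside any context $X[\,\cdot\,]$ by $S'$ strictly lowers the rank, $\overline{X[S']}<\overline{X[S]}$. This follows by an easy induction on $X[\,\cdot\,]$, since each rank-building operation $(x,y)\mapsto x+y$, $(x,y)\mapsto 5^{x+y}$, $(x,y)\mapsto{}^{x+y}5$ is strictly increasing in both arguments (and ${}^{n}5$ is strictly increasing in $n$). Granting this, clause~1 reduces to checking that each step replaces its displayed subcirquent by one of strictly smaller rank; the auxiliary uses of Commutativity, Associativity and Domination within a step are harmless, Commutativity in particular leaving the rank unchanged, so only the net replacement of the step matters.

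Writing $\alpha=\overline{A}$, $\beta=\overline{B}$, $\gamma=\overline{C}$, $\delta=\overline{D}$, most of the local checks are immediate. Steps~3 and~4 turn a $\mld$-rooted subcirquent (rank $\ge{}^{2}5=3125$) into $\top$ (rank $1$); Step~5 turns a $\mlc$-rooted subcirquent (rank $\ge 25$) into $\bot$; Step~6 replaces $\top\mlc A$ or $\bot\mld A$ (rank $5^{1+\alpha}$ or ${}^{1+\alpha}5$) by $A$ (rank $\alpha$), and both exceed $\alpha$; Step~8 replaces $E\adc^c F$ (rank $\overline{E}+\overline{F}$) by $E$, dropping the positive amount $\overline{F}\ge 1$. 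The two substantive choice/distribution steps exploit the fast growth coded into the rank. In Step~2 the conclusion has rank ${}^{\alpha+\beta+\gamma}5$ while the premise has rank ${}^{\alpha+\gamma}5+{}^{\beta+\gamma}5$; taking $\alpha\ge\beta$, the latter is $\le 2\cdot{}^{\alpha+\gamma}5<5^{{}^{\alpha+\gamma}5}={}^{\alpha+\gamma+1}5\le{}^{\alpha+\beta+\gamma}5$ (using $5^{x}>2x$ and $\beta\ge1$). In Step~7 the conclusion has rank $5^{\alpha+\beta+\gamma+\delta}$ and the Quadrilemma premise has rank $5^{\alpha+\gamma+\delta}+5^{\beta+\gamma+\delta}+5^{\alpha+\beta+\gamma}+5^{\alpha+\beta+\delta}$; each exponent omits at least one variable $\ge1$, so the sum is $\le 4\cdot 5^{\alpha+\beta+\gamma+\delta-1}<5^{\alpha+\beta+\gamma+\delta}$ --- precisely the place where the base being $5>4$ is used.

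The case I expect to be the main obstacle is Step~1 (Pand-Distribution), the only step where the rank must fall despite the premise $(A\mld C)\mlc(B\mld C)$ duplicating $C$ and promoting two inner disjunctions to tetrations. Here the conclusion $(A\mlc B)\mld C$ has rank ${}^{5^{\alpha+\beta}+\gamma}5$ and one must establish $5^{{}^{\alpha+\gamma}5+{}^{\beta+\gamma}5}<{}^{5^{\alpha+\beta}+\gamma}5$. The intended chain, with $\alpha\ge\beta$, is $5^{{}^{\alpha+\gamma}5+{}^{\beta+\gamma}5}\le 5^{2\cdot{}^{\alpha+\gamma}5}=\bigl({}^{\alpha+\gamma+1}5\bigr)^{2}\le{}^{\alpha+\gamma+2}5$ (using $x^{2}\le 5^{x}$), after which it suffices that $\alpha+\gamma+2<5^{\alpha+\beta}+\gamma$, i.e.\ $\alpha+2<5^{\alpha+\beta}$, which holds for all $\alpha,\beta\ge1$. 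The moral is that moving $\mlc$ (exponentiation) outermost while pushing the disjunctions inward strictly beats the single outer $\mld$ (tetration) of the conclusion, and confirming this genuinely needs the tetration-over-exponentiation gap rather than any slack one could get from a slower-growing rank.

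For clause~2 the plan is to sort the steps by their effect on literals. Steps~1,~2 and~7 merely \emph{duplicate} subcirquents verbatim, so they introduce no new $(\text{sign},\text{letter},\text{cluster})$ triple; hence the set of distinct literals of $D$, and with it the family of distinct positive and negative clusters carried by each letter, is unchanged, no letter changes pseudoelementary status, and the set of non-pseudoelementary literals is preserved exactly. (This is where it matters that the count is of distinct literals $P^{a},\neg P^{a}$: duplicated occurrences share their cluster and so do not inflate it.) The remaining Steps~3,~4,~5,~6,~8 only \emph{delete} material --- replacing a subcirquent by $\top$, by $\bot$, by a subcirquent, or by one branch of a choice --- which can only shrink, letter by letter, the sets of positive and of negative clusters. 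Since a letter is non-pseudoelementary exactly when it carries two distinct positive or two distinct negative clusters, a property monotone under adding clusters, shrinking these sets can never create a non-pseudoelementary letter and can only demote some to pseudoelementary. Therefore the number of non-pseudoelementary literals of $D$ never increases, completing clause~2.
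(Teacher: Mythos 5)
Your proof is correct and follows essentially the same route as the paper's: the monotonicity of the rank function reduces everything to a local comparison at each of the eight steps, with Step 1 (Pand-Distribution) as the only nontrivial computation, which you settle by bounding $5^{{}^{\alpha+\gamma}5+{}^{\beta+\gamma}5}$ by a tetration of height $\alpha+\gamma+2$ --- a minor bookkeeping variant of the paper's bound via the product $({}^{\alpha+\gamma+1}5)\times({}^{\beta+\gamma+1}5)\le{}^{\alpha+\beta+\gamma+1}5$. Your clause~2 argument is the paper's one-line observation that no step creates new literals, spelled out with the (correct and worth making explicit) caveat that the count is of distinct literals, so the duplications in Steps~1, 2 and~7 are harmless.
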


\begin{proof} Clause 2 of the lemma immediately follows from the simple observation that none of the steps of the Purification procedure creates any new literals. The rest of this proof is for clause 1. 

In view of the monotonicity of addition, exponentiation and tetration, first of all observe that the rank function is monotone in the following sense. Consider a cirquent $A$ together with an occurrence $O$ of a subcirquent $B$ in it. Assume $B'$ is a cirquent with $\overline{B'}< \overline{B}$, and $A'$ results from $A$ by replacing the occurrence $O$ of $B$ by $B'$. Then 
$\overline{A'} < \overline{A}$. Also keep in mind that the rank of a cirquent is always at least 1. 

An application of any given stage of the Purification procedure replaces a subcirquent $X$ of $D$ by some cirquent $Y$. In view of the monotonicity of the rank function, in order to show that such a replacement decreases the rank of $D$, it is sufficient to show that $\overline{X}<\overline{Y}$.   

{\em Step 1}: Each iteration of this step replaces, in $D$,   a subcirquent $(A\mlc B)\mld C$ or $C\mld(A\mlc B)$ by  $(A\mld C)\mlc (B\mld C)$. 
The rank of $(A\mlc B)\mld C$ or $C\mld(A\mlc B)$ is
$ ^{[5^{(\overline{A}+\overline{B})}+\overline{C}]}5$, and the rank of  $(A\mld C)\mlc (B\mld C)$ is  $5^{[^{(\overline{A}+\overline{C})}5+^{(\overline{B}+\overline{C})}5]}$. 
We want to show that $5^{[^{(\overline{A}+\overline{C})}5+^{(\overline{B}+\overline{C})}5]}<^{[5^{\overline{A}+\overline{B}}+\overline{C}]}5$.  

We of course have $\overline{A}+\overline{B}+1< 5^{\overline{A}+\overline{B}}$, whence $\overline{A}+\overline{B}+\overline{C}+1< 
5^{\overline{A}+\overline{B}}+\overline{C}$, whence 
\mbox{$^{[\overline{A}+\overline{B}+\overline{C}+1]}5
<^{[5^{\overline{A}+\overline{B}}+\overline{C}]}5$.} We also have

\[5^{[^{(\overline{A}+\overline{C})}5+^{(\overline{B}+\overline{C})}5]}=  5^{[^{(\overline{A}+\overline{C})}5]}\times 5^{[^{(\overline{B}+\overline{C})}5]}=[^{(\overline{A}+\overline{C}+1)}5]\times [^{(\overline{B}+\overline{C}+1)}5] \leq ^{[\overline{A}+\overline{B}+\overline{C}+1]}5.\] 
 Consequently, $5^{[^{(\overline{A}+\overline{C})}5+^{(\overline{B}+\overline{C})}5]}<^{[5^{\overline{A}+\overline{B}}+\overline{C}]}5$, as desired.

{\em Step 2}: Each iteration of this step replaces a subcirquent $(A\adc^c B)\mld C$ or $C\mld(A\adc^c B)$ by  $(A\mld C)\adc^c (B\mld C)$. The rank of $(A\adc^c B)\mld C$ or $C\mld(A\adc^c B)$ is $^{(\overline{A}+\overline{B}+\overline{C})}5$, and the rank of  $(A\mld C)\adc^c (B \mld C)$ is $^{(\overline{A}+\overline{C})}5+^{(\overline{B}+\overline{C})}5$. Of course the latter is smaller than the former.

{\em Steps 3, 4, 5 and 6}: Obvious.

{\em Step 7}:  The rank of $(A \adc^a B)\mlc (E \adc^b F)$ is $5^{[\overline{A}+\overline{B}+\overline{E}+\overline{F}]}$, and   the rank of  
\[\Bigl(\bigl(A\mlc (E \adc^b F)\bigr)\adc^a \bigl(B\mlc (E \adc^b F)\bigr)\Bigr) 
\adc^c \Bigl(\bigl((A\adc^a B)\mlc E\bigr)\adc^b\bigl((A\adc^a B)\mlc D\bigr)\Bigr)\]
is $5^{(\overline{A}+\overline{E}+\overline{F})}+5^{(\overline{B}+\overline{E}+\overline{F})}+5^{(\overline{A}+\overline{B}+\overline{E})}+ 5^{(\overline{A}+\overline{B}+\overline{F})}$. Obviously the latter is smaller than the former.

{\em Step 8}: Obvious.
 \end{proof}

By the {\bf extended rank} of a cirquent $C$ we shall mean the pair $(x,y)$, where $x$ is the rank of $C$ and $y$ is the total number of non-pseudoelementary literals of $C$. We define the linear order $\prec$ (read ``smaller than'', or ``lower than'')  on extended ranks by stipulating that $(x,y)\prec (z,t)$ iff (i) either $x<z$ (ii) or $x=z$ and $y<t$.  Of course, $\prec$  does not create  infinite descending chains, which allows us to run induction on it. 

\begin{lemma}\label{pl}
Let $A$ be an arbitrary cirquent and $B$ be its purification. Then:

1. If $B$ is provable, then so is $A$. 

2. $A$ is valid iff so is $B$.

3. The extended rank of $B$ does not exceed that of $A$. 

4. $B$ is pure. 
\end{lemma}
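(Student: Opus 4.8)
The plan is to dispatch parts~1--3 as short consequences of the shape of the Purification procedure together with Lemmas~\ref{pres} and~\ref{terminate}, and to reserve the real work for part~4. The basic fact I would record first is that, run on $A$, Purification produces a finite chain $A=D_0,D_1,\ldots,D_k=B$ in which each $D_{i+1}$ arises from $D_i$ by applying, in the conclusion-to-premise direction, one or more of the rules Distribution, Commutativity, Associativity, Identity, Domination, Trivialization, Quadrilemma and Cleansing; equivalently, each $D_{i+1}\leadsto D_i$ is a (premise-to-conclusion) application of a rule of $\cleighteen$. Finiteness of the chain, which is what makes $B$ well defined, is guaranteed by clause~1 of Lemma~\ref{terminate} together with the fact that the rank is always a positive integer and so cannot strictly decrease indefinitely. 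Part~3 is then immediate from Lemma~\ref{terminate}: each step strictly decreases the rank and never increases the number of non-pseudoelementary literals, so the extended rank of $B$ is $\prec$ that of $A$ when at least one step was applied and equals it when $B=A$; either way it does not exceed that of $A$. Part~1 follows by concatenation: appending to a proof of $B$ the applications witnessing $D_k\leadsto\cdots\leadsto D_0$ yields a proof of $A$. For part~2, since no step uses Choosing or Matching, both clauses of Lemma~\ref{pres} apply to every $D_{i+1}\leadsto D_i$, so validity is preserved in both directions along the chain, giving that $A$ is valid iff $B$ is.

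The heart of the argument is part~4, which I would prove in the form ``the terminal cirquent $B$ --- one to which no step applies --- is pure'', by a case analysis on the root of $B$. The cases $B=\top$, $B=\bot$, $B$ a literal, and $B$ $\add$-rooted are immediate: each is, with the single-conjunct/single-disjunct reading, conjunctively normal and hence pure. If $B=B_1\adc^c B_2$ is $\adc^c$-rooted, the inapplicability of Step~8 means that neither $B_1$ nor $B_2$ contains a subcirquent $E\adc^c F$, i.e.\ neither contains $\adc^c$, which is exactly the fourth clause of the definition of purity. (Here it matters that the subcirquents of $B_1,B_2$ lie in the scope of a choice connective and so are not surface subcirquents, which is why no other step can fire inside them.)

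The two substantial cases are $B$ $\mld$-rooted and $B$ $\mlc$-rooted, where I fix the convention that $A_1\mld\cdots\mld A_m$ and $A_1\mlc\cdots\mlc A_m$ denote the \emph{maximal} decompositions (each $A_i$ non-$\mld$- resp.\ non-$\mlc$-rooted). For $\mld$-rooted $B=A_1\mld\cdots\mld A_m$ ($m\geq2$), every $A_i$ and every binary $\mld$ joining them is a surface subcirquent, and I would rule out each ``bad'' root of $A_i$ by the step that forbids it: an $\mlc$-rooted $A_i$ makes its carrying binary $\mld$ a surface $(A\mlc B')\mld C$ (Step~1); an $\adc$-rooted $A_i$ a surface $(A\adc^d B')\mld C$ (Step~2); $A_i=\top$ triggers Step~4; $A_i=\bot$ makes the carrying $\mld$ a surface $A\mld\bot$ or $\bot\mld A$ (Step~6). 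So each $A_i$ is a literal or $\add$-rooted, and the remaining requirement of disjunctive normality --- absence of the forbidden opposite literals among $A_1,\ldots,A_m$ --- is precisely what inapplicability of Step~3 (Trivialization, which acts on opposite pseudoelementary literals) yields for this surface disjunction; hence $B$ is disjunctively normal and so conjunctively normal. For $\mlc$-rooted $B=A_1\mlc\cdots\mlc A_m$ ($m\geq2$), Step~5 excludes a $\bot$ conjunct and Step~6 a $\top$ conjunct; any conjunct that is not $\adc$-rooted is, after these exclusions, a literal, $\add$-rooted, or $\mld$-rooted, and is disjunctively normal by the $\mld$-rooted analysis just given (with the single-disjunct case covering the first two); and Step~7 forbids all conjuncts being $\adc$-rooted, which delivers the ``at least one $A_i$ not $\adc$-rooted'' clause. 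So $B$ is conjunctively normal, hence pure.

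The main obstacle I anticipate is the bookkeeping of part~4 rather than any single hard estimate: one must commit to the maximal-decomposition reading of the iterated $\mld$ and $\mlc$, check that each offending configuration occurs as a genuine \emph{surface} subcirquent so that the corresponding step is truly applicable (and hence its absence informative), and pair each of the eight steps with exactly the structural defect it precludes. The most delicate point is the $\mlc$-rooted, all-conjuncts-$\adc$-rooted subcase: one has to locate, inside the top-level $\mlc$-tree, a binary $\mlc$ both of whose operands are $\adc$-rooted \emph{leaves} (a deepest internal node of the tree works) so that Step~7 genuinely applies. Everything else reduces to the routine matching of forms to forbidding steps.
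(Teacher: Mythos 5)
Your proof is correct and follows essentially the same route as the paper's: clauses 1--3 are dispatched exactly as in the paper via Lemmas \ref{pres} and \ref{terminate}, and for clause 4 you supply precisely the case analysis (``if no step applies then the cirquent is pure'') that the paper explicitly leaves to the reader, including the one genuinely delicate point of locating, when all top-level conjuncts are $\adc$-rooted, a deepest binary $\mlc$ with two $\adc$-rooted operands so that Step 7 is applicable. Note only that your matching of Step 3 to the opposite-literal condition presupposes reading ``non-pseudoelementary'' as ``pseudoelementary'' in the paper's definition of a disjunctively normal subcirquent, which is evidently the intended reading (it is the one restated in the completeness proof and the only one under which clause 4 is actually true).
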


\begin{proof} {\em Clause 1}: Immediate as each step of the Purification procedure applies one of the inference rules of $\cleighteen$ in the conclusion-to-premise direction. 

{\em Clause 2}: In view of the  reason pointed out above, validity of $A$ imples validity of $B$ by clause 1 of Lemma \ref{pres}. The converse implication follows  from clause 2 of Lemma \ref{pres} and the fact that, when obtaining $B$ from $A$ using the Purification procedure, the rules of Choosing and Matching were not used.

{\em Clause 3}: Immediate from Lemma \ref{terminate}

{\em Clause 4}: Can be verified by observing that, if a cirquent $D$ is not pure, then, depending on the reason, at least one of the (rank-decreasing) steps of the Purification procedure applies, so, sooner or later, all possible reasons for not being pure will be eliminated. Details of this somewhat tedious but otherwise straightforward analysis are left to the reader. 
\end{proof}

\begin{theorem} \label{theo}
A cirquent is valid if (soundness) and only if (completeness) it is provable.
\end{theorem}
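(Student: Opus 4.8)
The plan is to prove the two directions separately, with soundness being routine and completeness carrying essentially all of the weight.

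\textbf{Soundness.} I would argue by induction on the length of a proof $C_1,\ldots,C_n$ of a cirquent $A$. The axiom $C_1=\top$ is valid, since the HPM that never moves is a logical solution of $\top^*$ for every interpretation $^*$ (clause~1 of the $\win{C^*}{}$ part of Definition~\ref{clsem} makes every legal run of $\top^*$ a $\top$-won one). For the inductive step, each $C_i$ with $i\geq 2$ follows from earlier cirquents --- valid by the induction hypothesis --- via one of the rules of $\cleighteen$, and clause~1 of Lemma~\ref{pres} guarantees that every rule preserves validity in the premise-to-conclusion direction. Hence $A=C_n$ is valid.

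\textbf{Completeness.} Here the goal is to show that every valid cirquent is provable, and I would proceed by induction on the extended rank with respect to the well-founded order $\prec$. Given a valid cirquent $A$, I first pass to its purification $B$. By Lemma~\ref{pl}, $B$ is valid (clause~2), its extended rank does not exceed that of $A$ (clause~3), and $B$ is pure (clause~4); moreover, by clause~1 it suffices to prove that $B$ is provable. So the problem reduces to proving provability of an arbitrary \emph{valid pure} cirquent whose extended rank is $\preceq$ that of $A$, and I may now split on the four shapes a pure cirquent can take.

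If $B=\top$, it is the axiom. The case $B=\bot$ cannot occur, since $\bot$ is not valid. If $B$ is the $\adc$-rooted pure cirquent $A_0\adc^c A_1$ with $\adc^c$ occurring in neither $A_0$ nor $A_1$, then $B$ is exactly the conclusion of a legitimate Splitting (the side condition holds by purity), so clause~2 of Lemma~\ref{pres} yields that $A_0$ and $A_1$ are both valid; each has strictly smaller rank (hence strictly smaller extended rank) than $B$, so the induction hypothesis makes them provable, and one forward application of Splitting proves $B$. The remaining, and principal, case is when $B$ is conjunctively normal. For its surface choice material I would read off the required moves from a logical solution $\cal H$ of $B$ (its choices are interpretation-independent, since $\cal H$ cannot see $^*$) and reduce the choice structure using the choice rules of $\cleighteen$ --- Left/Right Choosing for $\add$-rooted material and Splitting, Quadrilemma and Cleansing for $\adc$-rooted material --- each application strictly lowering the extended rank, so that the induction hypothesis discharges the resulting premise(s) and a single forward application of the rule reproves $B$.

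This leaves the genuine core: a pure, choice-free conjunctively normal cirquent, i.e.\ a cirquent of the form $(L_{1,1}\mld\cdots\mld L_{1,m_1})\mlc\cdots\mlc(L_{k,1}\mld\cdots\mld L_{k,m_k})$ in which every $L_{i,j}$ is a literal, and this is where I expect the main obstacle to lie. Since the atoms denote arbitrary static games, validity here must rest on a combinatorial matching of complementary literals, and the task is to extract that matching from the semantics. My plan is to interpret each non-pseudoelementary letter $P$ by a suitably ``hard'' static game for which the only way to win a parallel combination is to relay the adversary's moves by copycat between an oppositely-signed pair of occurrences; running $\cal H$ against such interpretations should then force, in each conjunct, a distinguished literal together with an oppositely-signed partner, yielding a consistent system of matched opposite pairs across the conjuncts. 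I would next recolor every matched non-pseudoelementary pair with a fresh letter, obtaining a cirquent $B'$ that is still valid (the copycat witness is letter-agnostic) and has strictly smaller extended rank, because its second component --- the number of non-pseudoelementary literals --- drops as these occurrences become pseudoelementary. By the induction hypothesis $B'$ is provable; its own reduction now triggers Trivialization on the freshly pseudoelementary matched pairs and collapses, via Pand-Distribution together with Identity and Domination, to the axiom $\top$. Finitely many forward applications of Matching then convert the proof of $B'$ into a proof of $B$. The delicate point, the one I expect to dominate the argument, is the rigorous justification that game-semantic validity genuinely forces a consistent matching realizable by copycat --- the precise bridge from winnability to the syntactic pattern that Matching and Trivialization can exploit.
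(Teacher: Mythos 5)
Your skeleton coincides with the paper's: soundness from clause~1 of Lemma~\ref{pres} plus validity of the axiom, and completeness by induction on the extended rank after passing to the purification, with the cases $\top$, $\bot$ and Splitting handled exactly as the paper does. The problem is that the conjunctively normal case --- which you correctly identify as carrying all the weight --- is not actually proved, and the reduction strategy you sketch for it would not go through. First, a pure conjunctively normal cirquent may still contain an $\adc$-rooted \emph{conjunct} sitting beside a disjunctive one (e.g.\ $(A\adc^{a}B)\mlc(L_1\mld L_2)$), and no rule of $\cleighteen$ lets you strip such a conjunct: Splitting requires the whole cirquent to be $\adc$-rooted, and Quadrilemma needs \emph{two} $\adc$-rooted conjuncts. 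The paper sidesteps this entirely by observing that a logical solution of $C_1\mlc\cdots\mlc C_m$ is automatically a logical solution of the single non-$\adc$-rooted conjunct $C_1=E_1\mld\cdots\mld E_n$, and then works only inside $C_1$. Second, your plan to first ``read off'' the $\add$ choices and eliminate all choice material, and only then do the matching, presupposes that the machine commits to resolving the surface $\add$s; but a solution of $E_1\mld\cdots\mld E_n$ may win without ever resolving any $\add$, by copycatting between two opposite literals among the $E_i$. One cannot sequence the two reductions; one has to establish the dichotomy. The paper does this by fixing a single adversary scenario (the environment plays $L_1.1,\ldots,L_k.k$ and waits) and showing that in that scenario the machine must eventually either resolve some $\add^{c}$ occurring among $E_1,\ldots,E_n$ (then apply Choosing and recurse) or make the two moves $L_y.x$, $L_x.y$ for opposite $L_x,L_y$ among the $E_i$ (then apply Matching to that \emph{one} pair and recurse); the proof that one of the two events must occur requires exhibiting an explicit losing interpretation, and the subsequent verification that the modified machine solves the reduced cirquent requires the carefully crafted two-move games $\seq{\bot m,\top n}$ of the interpretation $^{\circ}$.

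This last construction is exactly the ``precise bridge from winnability to the syntactic pattern'' that you flag as the delicate point and leave open, so the core of the completeness argument is missing rather than merely compressed. A further, smaller issue: you propose to extract a \emph{simultaneous} consistent matching across all conjuncts and recolor every matched pair at once; the paper only ever extracts one opposite pair in one conjunct per induction step (which already strictly lowers the extended rank, since the recolored pair becomes pseudoelementary), and the stronger simultaneous claim is both unnecessary and unsubstantiated.
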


\begin{proof} The soundness part is immediate from Lemma \ref{pres} and the obvious fact that the axiom $\top$ is valid. The rest of this section is devoted to a proof of the completeness part. 
Pick an arbitrary cirquent $A$ and assume it is valid. Our goal is to show that $A$ is provable, which will be done by induction on the extended rank of $A$.  Let $B$ be the purification of  $A$. According to clauses 2-4 of Lemma \ref{pl}, $B$ is a valid, pure  cirquent whose extended rank does not exceed that of $A$. Also, by clause 1 of the same lemma, if $B$ is provable, then so is $A$. Hence, in order to show that $A$ is provable, it will suffice to show---as we are going to do---that $B$ is provable. In view of $B$'s being pure, the following cases cover all possibilities for $B$. 

One case is that $B=\bot$. But this is  ruled out because then,   of course, $B$ would not be valid. 

Another case is that $B=\top$. But then $B$ is an axiom and hence provable.

Yet another case is that $B$ has the form $E_0\adc^c E_1$, where neither $E_0$ nor $E_1$ contain the connective $\adc^c$. By clause 2 of Lemma \ref{pres}, both $E_0$ and $E_1$ are valid, because $B$ follows from them by Splitting.  Of course, the (extended) rank of either cirquent is smaller than that of $B$. Hence, by the induction hypothesis, both $E_0$ and $E_1$ are provable. But then so is $B$, as it follows from $E_0$ and $E_1$ by Splitting.

The final possibility for $B$, to which the rest of this proof is devoted, is that it is conjunctively normal. This means $B$ has the form $C_1\mlc \ldots\mlc C_m$ ($n\geq 1$),  and at least one conjunct of it---without loss of generality assume it is $C_1$---is of the form $E_1\mld\ldots\mld E_n$ ($n\geq 1$), where each $E_i$ is either a literal or a  $\add$-rooted cirquent; besides, there is no pair of opposite pseudoelementary literals of $B$  among $E_1,\ldots,E_n$.  Assume  $\cal M$ is a logical solution of $B$. Notice that then $\cal M$ is automatically also a logical solution of $C_1$. 
Let $L_1,\ldots,L_k$ be all literals that occur in $B$. 
Consider the work of $\cal M$ in the (partial)  scenario where its environment makes the moves  $L_1. 1,\ldots,L_k.k$  right at the beginning of the play, and makes no further moves until $\cal M$ also moves. We claim that, in this scenario, sooner or later, (at least) one of the following events will have taken place:

\begin{description}
\item{Event 1:} There is a cirquent $A_0\add^c A_1$ among  $E_1,\ldots ,E_n$  such that  $\cal M$ has made the move $\add^c\hspace{-1pt}.i$, where $i=0$ or $i=1$.  
\item{Event 2:} There are opposite literals $L_x,L_y$ (where $x,y\in\{1,\ldots,k\}$) among $E_1,\ldots ,E_n$  such that $\cal M$ has made  the two moves   $ L_{y}.x$ and  $L_{x}.y$. That is, in intuitive terms, $\cal M$ has made the same move $x$ in $L_{y}$ as its adversary has made in $L_{x}$, and $\cal M$ has made the same move $y$ in $L_x$ as its adversary has made in $L_y$.
\end{description}
Indeed, if neither Event ever occurs, then, where $\Upsilon$ is the run generated by $\cal M$ in the above scenario,  it ($\Upsilon$) will be a $\top$-lost run of $C_1$ under an interpretation $^*$ that interprets the letter of every  literal $L\in\{E_1,\ldots,E_n\}$ as a game such that $\Upsilon^{L.}$ is a $\top$-lost run of $L^*$; this, however, is impossible in view of our assumption that $\cal M$ is a logical solution of $C_1$. 

Our further reasoning on the path to showing that $B$ is provable depends on which of the above two Events occurs (if both occur, and/or occur multiple times, just pick one arbitrarily).

{\em Case of Event 1.} We construct a new machine $\cal N$ as follows. 
 Throughout the play, $\cal N$ maintains a list of what we call {\bf couples}, with each couple being an unordered pair $\{R, S\}$  of opposite literals of $B$. When $\{R, S\}$ is on this list, we say that $R$ and $ S$  are {\bf spouses} of each other. Whenever a literal has a spouse, it is said to be {\bf married}. Initially the list of couples is empty. During the play, a new couple can be added to the list of couples and, once a couple is on the list, it remains on it forever.  
$\cal N$ simulates   $M$ in the scenario where the adversary of the latter made the earlier mentioned moves $ L_1. 1,\ldots, L_k.k$  right at the beginning of 
the play. During this process, $\cal N$ simultaneously reads the labmoves on its run tape and the imaginary run tape of $\cal M$, and does this in the left-to-right fashion so that it never reads the same labmove twice.   If on $\cal M$'s run tape  $\cal N$ sees  two  labmoves $\top  L_u.v$ and $\top  L_v.u$ where $u,v\in\{1,\ldots,k\}$ and the literals $L_u,L_v$ are opposite, $\cal N$ declares $\{ L_u, L_v\}$ to be a (new) couple. Also, whenever on its own run tape $\cal N$ sees a new (not-yet-read)  labmove $\bot R.\alpha$ for some married literal $R$, $\cal N$  makes the move $S.\alpha$, where $S$ is the spouse of $R $. 
Other than that, $\cal N$ fully correlates the
 choice moves between the real and imaginary plays. Namely, whenever $\cal N$  sees a new  choice  labmove $\bot \ada^c.i$ on its run tape, it appends the same labmove $\bot \ada^c.i$ to the content of the imaginary run tape of the simulated $\cal M$; and whenever $\cal N$ sees a new   labmove $\top  \add^d.i$ on the imaginary run tape of $\cal M$, $\cal N$ makes the  
move $\add^d.i$ in its real play. 

For runs $\Omega,\Upsilon$, we shall write $\Omega\approx\Upsilon$ iff  one run is a (not necessarily proper) permutation of the other. 

Fix   $\Gamma$  as an arbitrary run generated by $\cal N$, and $\Delta$ as the run incrementally spelled on the imaginary (by $\cal N$) run tape of the simulated $\cal M$  in the process in which  $\Gamma$ was generated.    An analysis of that process, details of which are left to the reader, easily convinces us that:
\begin{equation}
\label{jan21}\mbox{\em for any two $ i, j\in\{ 1,\ldots, k\}$, if $\Delta^{L_{ i}.}\approx\overline{\Delta^{L_ {j}.}}$, then  $\Gamma^{L_{ i}.}$  is a $\top$-delay of $\overline{\Gamma^{L_{ j}.}}$}
\end{equation}

We now claim that 
\begin{equation}\label{jan13}\mbox{\em $\cal N$ is a logical solution of $B$.}\end{equation}
To prove the above claim, deny it for a contradiction. Let  $^*$ be an interpretation such that $\cal N$ does not always win $B^*$. Namely, we can assume that the above-fixed run $\Gamma$ is a $\top$-lost run of  $B^*$. 
We  define a new interpretation $^\circ$ by stipulating that, for every letter $E$, we have:
\begin{description} 
\item[$\legal{E^\circ}{}$] A nonempty run is a legal run of $E^\circ$ iff it has one of the forms $\seq{\bot m}$,  $\seq{\top n}$,  $\seq{\bot m,\top n}$, or $\seq{\top n,\bot m}$, where $n,m\in\{0,1,2,\ldots\}$. 
\item[$\win{E^\circ}{}$] For every $m,n\in\{0,1,2,\ldots\}$, we have:
\begin{enumerate}
\item  $\win{E^\circ}{}\seq{ }=\bot$ (this choice is arbitrary, could have been chosen $\top$ instead).
\item $\win{E^\circ}{}\seq{\bot m}=\bot$.
\item $\win{E^\circ}{}\seq{\top m}=\top$.
\item $\win{E^\circ}{}\seq{\bot m,\top n}=\win{E^\circ}{}\seq{\top n,\bot m}=\wp$, where the player $\wp$ is determined as follows:
 \begin{enumerate}
\item If $\seq{\bot m,\top n}=\Delta^{L.}$ for some positive $E$-lettered  literal  $L$, then $\wp=\win{E^*}{}\seq{\Gamma^{ L.}}$. 
\item Otherwise $\wp=\top$.
\end{enumerate}
\end{enumerate}
\end{description}

From the clean environment assumption for $\cal N$, the way $\Delta$ was generated, and the assumption that $\cal M$ is a logical solution of $B$, with a little thought, it can be seen that 
\begin{equation}\label{little}
\mbox{\em $\Delta$ is a legal, $\top$-won run of $B^\circ$.}
\end{equation}

We now claim that
\begin{equation}\label{jan27}\mbox{\em  For every literal $L$ of $B$,  if $\win{L^*}{}\langle\Gamma^{ L.}\rangle=\bot$, then $\win{L^\circ}{}\langle\Delta^{ L.}\rangle=\bot$.}
\end{equation}

To justify (\ref{jan27}), assume $\win{L^*}{}\langle\Gamma^{ L.}\rangle=\bot$. Our goal is to show that  $\win{L^\circ}{}\langle\Delta^{ L.}\rangle=\bot$.  Let $E$ be the letter of $L$. In view of (\ref{little}) and the way $\Delta$ was generated,  $\Delta^{L.}$ has to be either $\seq{\bot m}$ or $\seq{\bot m,\top n}$ for some $m,n$. 

First, consider the case of $\Delta^{L.}=\seq{\bot m}$. If the literal $L$ is positive, then $L^\circ=E^\circ$, and  the consequent of   (\ref{jan27}) is immediate from clause 2 of our definition of $\win{E^\circ}{}$. And if $L$ is negative, then $L^\circ=\neg E^\circ$, so  $\win{L^\circ}{}\langle\Delta^{ L.}\rangle= \win{L^\circ}{}\langle\bot m\rangle= \win{\neg E^\circ}{}\langle\bot m\rangle=\overline{ \win{E^\circ}{}\langle\top m\rangle}$. But by clause 3 of our definition of $\win{E^\circ}{}$, 
$\win{E^\circ}{}\langle\top m\rangle=\top$, so $\overline{ \win{E^\circ}{}\langle\top m\rangle}=\overline{\top}=\bot$. We thus find again that  the consequent of  (\ref{jan27})  is true. 

Now consider the case of $\Delta^{L.}=\seq{\bot m,\top n}$. If the literal 
$L$ is positive, then  the consequent of   (\ref{jan27}) is immediate from  clause 4(a) of our definition of $\win{E^\circ}{}$. Suppose now $L$ is  negative. 
First, assume there is a positive $E$-lettered literal $K$ such that  $\Delta^{K.}=\seq{\bot n,\top m}$. We thus have $\Delta^{K.}\approx\overline{\Delta^{L.}}$. Our assumption $\win{L^*}{}\langle\Gamma^{ L.}\rangle=\bot$, i.e.,  $\win{\neg E^*}{}\langle\Gamma^{ L.}\rangle=\bot$,  implies that $\win{E^*}{}\overline{\langle\Gamma^{ L.}\rangle}=\top$ and hence, in view of (\ref{jan21}), $\win{E^*}{}\langle\Gamma^{ K.}\rangle=\top$. Consequently, by clause 4(a) of our definition of $\win{E^\circ}{}$,  $\win{E^\circ}{}\seq{\top m,\bot n}=\top$. But this, in turn, implies that  $\win{\neg E^\circ}{}\seq{\bot m,\top n}=\bot$, i.e., $\win{L^\circ}{}\langle\Delta^{ L.}\rangle=\bot$, as desired. Now assume there is no such $K$. Then, by clause 4(b) of our definition of $\win{E^\circ}{}$, we again have $\win{E^\circ}{}\seq{\top m,\bot n}=\top$, which, again, implies the desired $\win{L^\circ}{}\langle\Delta^{ L.}\rangle=\bot$. This completes our justification of (\ref{jan27}).

 By induction on the complexity of $B$, claim (\ref{jan27})  (induction basis) and the fact that $\Gamma$ and $\Delta$ contain exactly the same choice moves, imply that  if $\win{B^*}{}\langle\Gamma\rangle=\bot$, then $ \win{B^\circ}{}\langle\Delta\rangle=\bot$. Hence $\Delta$ is a $\top$-lost run of $B^\circ$, because, by our assumption, $\Gamma$ is a $\top$-lost run of $B^*$. 
 But this is in contradiction with (\ref{little}), and  (\ref{jan13}) is thus proven. 
 
 Let  now $F$ be the result of changing, in  $B$, all subcirquents of the form $X_0\add^c X_1$ to $X_i$, where (if we have already forgotten) $c,i$ are as in the description of Event 1. According to our construction of $\cal N$, $\Gamma$ contains the labmove $\top c.i$ because $\Delta$ does. With this observation in mind, it can be seen that $\cal N$ is a logical solution of (not only $B$ but also) $ F$. Thus $F$ is valid.  Of course, the (extended) rank of $F$ is smaller than that of $B$. Hence, by the induction hypothesis, $F$ is provable. But then so is $B$, as it follows from   $F$ by Choosing.

{\em Case of Event 2.}  Let  $x$, $y$, $L_x$, $L_y$ be as in the  description of Event 2.  We may assume that $L_x=P^a$ snd $L_y=\neg P^b$ for some letter $P$ and clusters $a,b$. Note that, since $B$ is pure, $P$ is not a quasielementary letter of $B$. 
 We pick an arbitrary letter $Q$ not occurring in $B$, and arbitrary clusters $c,d$.   Let $\cal H$ be a machine whose work can be described literally the same way as we earlier described the work of the machine $\cal N$, with the only difference that $\cal H$ declares $\{Q^c,\neg Q^d\}$  to be  a couple instead of $\{ L_x, L_y\}$ . That is, $\cal H$ treats   $Q^c$ and $\neg Q^d$ as $\cal N$ would treat $P^a$ and $\neg P^ b$, respectively.  Let now $G$ be 
the result of replacing, in $B$, the literals $P^a$ and $\neg P^ b$ by $Q^c$ and $\neg Q^d$, respectively. With $\cal H$ and $ G$ in the roles of $\cal N$ and $B$, respectively, arguing exactly as we did when earlier proving (\ref{jan13}), we find that $\cal H$ is a logical solution of $G$. The extended rank of $G$ is smaller than that of $B$ and therefore, by the induction hypothesis, we can assume that $G$ is provable. But then so is $B$ because it follows from $G$ by Matching. This completes our proof of Theorem \ref{theo}.
\end{proof}

\section{From general-base to mixed-base formalism and back}\label{from} 

As briefly discussed in Section \ref{intr}, technically our system  $\cleighteen$ qualifies as  what CoL calls {\em general-base}, as the letters of its literals are allowed to be interpreted as any static games. Let us refer to such letters and corresponding atoms and literals as {\bf general}.  The language of $\cleighteen$ can be enriched by adding to  it a second, {\bf elementary} sort of letters, which are only allowed to be interpreted as elementary games, i.e., games with no legal moves (meaning that $\emptyrun$ is the only legal run).  This way we would get a what CoL calls {\em mixed-base} formalism. However,  as long as logical validity is concerned, there is no technical need for studying systems in the mixed-base formalism: the latter, as will be shown in this section,  can be easily embedded into the formalism of $\cleighteen$ by ``translating'' elementary letters  into pseudoelementary ones.  

The reason for clustering (general) atoms was to indicate whether different occurrences of same-letter atoms can be realized as different runs of the same game or the same run of the same game. Elementary games, however, do not have different possible (legal) runs. For this reason, there are no reasons for also clustering elementary atoms. 

So, let us consider the extension of the formalism defined in Section \ref{Syntax} obtained by adding to it elementary letters/atoms with no clusters attached, and for which we shall use the lowercase $p,q,\ldots$ as metavariables. With this modification of the concept of an atom in mind, a {\bf mixed-base cirquent} is defined verbatimly the same way as (just) cirquent in Definition \ref{cirquent}. The same goes for the concepts defined in Definitions \ref{clsem} and \ref{krisha}.   The concept of an interpretation $^*$ is modified by requiring that, for every elementary letter $p$, $p^*$ is an elementary game. 

Given two mixed-base cirquents $A$ and $B$, we say that $B$ is a {\bf one-step deelementarization} of $A$ iff, for some elementary letter $p$ occurring in $A$ and some general letter $P$ not occurring in $A$, $B$ is the result of replacing, in $A$, every occurrence (if there are any) of the literal $p$ by $P^0$ and every occurrence (if there are any) of the literal $\neg p$ by $\neg P^0$.\footnote{For our purposes, it does not really matter what the clusters of $P$ and $\neg P$ are or whether they are the same or not.} We say that $B$ is a {\bf full deelementarization} of $A$ iff $B$ contains no elementary atoms and there are cirquents $C_1,\ldots,C_n$ ($n\geq 1$) such that $C_1=A$, $C_n=B$ and, for each $1\leq i < n$, $C_{i+1}$ is a one-step deelementarization of $C_i$. 
\begin{lemma}\label{apr21}
Assume $A$ is a mixed-base cirquent and $B$ is a one-step deelementarization of $A$. Then $A$ is logically valid iff so is $B$. 
\end{lemma}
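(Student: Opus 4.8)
The plan is to prove the two directions separately, noting that one is routine and the other carries all the weight. Throughout I will work under the clean environment assumption, as is done in the proof of Lemma \ref{pres}.

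\emph{The easy direction} ($B$ valid $\Rightarrow$ $A$ valid). Given any interpretation $^*$ of $A$ (so that $p^*$ is an elementary game), I would define an interpretation $^+$ of $B$ by setting $P^+=p^*$ and letting $^+$ agree with $^*$ on all other letters. Since $p^*$ is elementary, the $P^0$ component of $B^+$ admits no legal moves, exactly as the $p$ component of $A^*$ does not; consequently $A^*$ and $B^+$ are one and the same game modulo the purely notational relabeling of $p.\alpha$-moves as $P^0.\alpha$-moves, of which there are none in any legal run. Hence a logical solution $\cal H$ of $B$, which in particular wins $B^+$, directly wins $A^*$; as $^*$ was arbitrary, $\cal H$ (with the trivial relabeling) is a logical solution of $A$.

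\emph{The hard direction} ($A$ valid $\Rightarrow$ $B$ valid). Here the difficulty is that the general letter $P$ of $B$ ranges over all static games, whereas the solution $\cal M$ of $A$ only copes with interpretations under which $p$ is elementary. The remedy exploits the fact that, after deelementarization, $P$ occurs in $B$ only as $P^0$ and $\neg P^0$, i.e.\ $P$ is pseudoelementary; this is precisely the situation handled by the Trivialization case of Lemma \ref{pres}, and I would reuse that technique. Fix an interpretation $^*$ of $B$ and build a machine $\cal N$ that (i) runs a copycat routine between the $P^0$ and $\neg P^0$ components---answering $\neg P^0.\alpha$ to each environment move $\bot P^0.\alpha$, and $P^0.\beta$ to each $\bot \neg P^0.\beta$---and (ii) on all remaining (non-$P$) moves simulates $\cal M$ in the scenario where $\cal M$'s imaginary environment never touches the $p$ component and otherwise echoes $\cal N$'s real environment.

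For the analysis, fix a run $\Gamma$ generated by $\cal N$ and let $\Upsilon$ be the run simultaneously produced by the simulated $\cal M$. Define the elementary interpretation $^+$ of $A$ by $p^+=\top$ if $\Gamma^{P^0.}$ is a $\top$-won run of $P^*$ and $p^+=\bot$ otherwise, with $^+$ agreeing with $^*$ elsewhere. The copycat guarantees that $\Gamma^{\neg P^0.}$ is a $\top$-delay of $\overline{\Gamma^{P^0.}}$, so by the definition of $\neg$ together with the static property of $P^*$ one obtains the dichotomy that \emph{either} $\Gamma^{P^0.}$ is $\top$-won in $P^*$ \emph{or} $\Gamma^{\neg P^0.}$ is $\top$-won in $\neg P^*$, the exact analogue of the corresponding displayed step in the Trivialization argument. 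Since $\cal M$ solves $A$, $\Upsilon$ is a $\top$-won run of $A^+$. I would then prove, by induction on the structure of $A$ (equivalently $B$), that for every subcirquent, being $\top$-won under $^+$ along $\Upsilon$ entails being $\top$-won under $^*$ along $\Gamma$: the positive-literal base case holds by the choice of $p^+$, the negative-literal base case holds by the dichotomy, non-$P$ literals are handled by the static property since $\Gamma$ and $\Upsilon$ agree there up to delay, and the inductive steps are immediate because $\Gamma$ and $\Upsilon$ carry the same choice labmoves. Applying this at the root yields that $\Gamma$ is $\top$-won in $B^*$; as $^*$ and $\Gamma$ were arbitrary, $\cal N$ is a logical solution of $B$.

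\emph{The main obstacle.} The crux is the hard direction, and within it the correct coupling of the copycat run with the auxiliary elementary interpretation $^+$: one must verify that the single dichotomy furnished by the copycat simultaneously validates the positive literal $P^0$ (via $p^+=\top$) and the negative literal $\neg P^0$ (via $p^+=\bot$), and that this local agreement propagates through the connectives. Since this is exactly the mechanism already established for Trivialization in Lemma \ref{pres}, I expect the argument to amount to a careful transcription of that proof rather than to any genuinely new idea, the only delicate points being the clean environment assumption and the repeated appeals to staticity of $P^*$ and $\neg P^*$.
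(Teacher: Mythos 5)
Your proof is correct and follows essentially the same route as the paper's: the easy direction transfers the solution of $B$ to $A$ via the interpretation $P\mapsto p^*$, and the hard direction builds the copycat-plus-simulation machine $\cal N$ exactly as the paper does. The only difference is that the paper leaves the verification that $\cal N$ solves $B$ as an exercise, whereas you spell it out by transcribing the Trivialization argument of Lemma \ref{pres}, which is indeed the intended mechanism.
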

\begin{proof} Assume $A$ and $B$ are as in the condition of the theorem and, correspondingly, $p$ and $P$ be as in our definition of one-step deelementarization. 

Assume $B$ is valid. Let $\cal M$ be an HPM that is a logical solution of $B$. We claim that the same $\cal M$ is also a logical solution of $A$ (and hence $A$ is valid). To see this, consider an arbitrary interpretation $^*$. Let $^\circ$ be the function such that $P^{\circ}=p^*$ and $^{\circ}$ agrees with $^*$ on all other letters. Having no nonempty legal runs, elementary games are automatically static. This means that $^{\circ}$ is a legitimate interpretation for $B$.  Since $\cal M$ is a logical solution of $B$, we have ${\cal M}\models B^\circ$. But notice that 
  $B^{\circ}$ and  $A^{*}$ are identical as games. So, ${\cal M}\models A^*$. As $^*$ was an arbitrary interpretation, we find that  $\cal M$ is  a logical solution of $A$.

For the opposite direction, assume that $A$ is valid. Let $\cal M$ be an HPM that is a logical solution of $A$. As $A$ does not contain the letter $P$, we may assume that $\cal M$ never makes (the dummy) moves of the form $P^0.\alpha$ or $\neg P^0.\alpha$. Let then $\cal N$ be an HPM that, simulating $\cal  M$, plays as follows. Every time $\cal N$ sees that the simulated $\cal M$ made a move $\alpha$, $\cal N$ makes the same move $\alpha$ in its real play. And every time $\cal N$ sees that its (real) environment made a move $\alpha$, except when $\alpha$ is of the form $P^0.\beta$ or $\neg P^0.\beta$, $\cal N$ appends the labmove $\bot \alpha$ to the imaginary run tape of $\cal M$.  Finally, if $\cal N$ sees that its adversary made a move $P^0.\beta$ (resp. $\neg P^0.\beta$), it makes the copicat move $\gneg P^0.\beta$ (resp. $P^0.\beta$) in the real play. It is left as an easy exercise for the reader to verify that  such an $\cal N$ is a logical solution of $B$ and hence $B$ is valid.  
\end{proof}

As an immediate corollary of Lemma \ref{apr21}, we have:
\begin{theorem}\label{hgh}
A mixed-base cirquent is valid iff so is its (general-base) full deelementarization.
\end{theorem}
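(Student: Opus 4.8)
The plan is to derive Theorem \ref{hgh} as a routine consequence of Lemma \ref{apr21} by chaining that lemma along the sequence of one-step deelementarizations that witnesses the full deelementarization. Recall that, by the definition given just above the theorem, a (general-base) full deelementarization $B$ of a mixed-base cirquent $A$ comes equipped with a sequence $C_1,\ldots,C_n$ ($n\geq 1$) of cirquents such that $C_1=A$, $C_n=B$, $B$ contains no elementary atoms, and $C_{i+1}$ is a one-step deelementarization of $C_i$ for each $1\leq i<n$. The only thing to observe is that validity is preserved, in both directions, at every link of this chain, which is exactly what Lemma \ref{apr21} asserts.

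First I would fix such a witnessing sequence $C_1,\ldots,C_n$ and argue by induction on $n$. For the base case $n=1$ we have $A=C_1=C_n=B$, so $A$ is valid iff $B$ is valid trivially. For the inductive step, suppose $n\geq 2$. The tail $C_2,\ldots,C_n$ witnesses that $B$ is a full deelementarization of $C_2$, and it is strictly shorter, so by the induction hypothesis $C_2$ is valid iff $B$ is valid. On the other hand, $C_2$ is a one-step deelementarization of $C_1=A$, so Lemma \ref{apr21} gives that $A$ is valid iff $C_2$ is valid. Transitivity of the biconditional then yields that $A$ is valid iff $B$ is valid, completing the induction.

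Since there is genuine freedom in how a full deelementarization is formed---namely in the choice of the fresh general letters and of their (semantically irrelevant) clusters at each step---I would also note explicitly that the argument applies uniformly to \emph{any} such $B$, so that the phrase ``its deelementarization'' in the statement is unambiguous as far as validity is concerned. No part of this requires any new game-semantical work; all the substance already resides in Lemma \ref{apr21}, and the remaining move is merely the transitive closure of its biconditional along the deelementarization chain. Consequently I do not anticipate any real obstacle: the only point demanding care is bookkeeping the induction on the length $n$ correctly, in particular recognizing that the tail $C_2,\ldots,C_n$ is itself a legitimate witness of full deelementarization for the shorter instance $C_2$.
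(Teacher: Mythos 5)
Your proposal is correct and is exactly the argument the paper intends: the paper presents Theorem \ref{hgh} as an ``immediate corollary'' of Lemma \ref{apr21}, and your induction on the length of the witnessing chain $C_1,\ldots,C_n$ is just the explicit transitive closure of that lemma's biconditional. Nothing further is needed.
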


\end{document}